\newcommand{\thickhline}{%
    \noalign {\ifnum 0=`}\fi \hrule height 0.8pt
    \futurelet \reserved@a \@xhline
}
\newcolumntype{"}{@{\hskip\tabcolsep\vrule width 1pt\hskip\tabcolsep}}
\begin{document}

\title[An Improved Lattice-Based Ring Signature with Unclaimable Anonymity in the Standard Model]{An Improved Lattice-Based Ring Signature with Unclaimable Anonymity in the Standard Model}

\author{Mingxing Hu}
\affiliation{Department of Computer Science and Engineering, Shanghai Jiao Tong University, Shanghai 200240, China} 

\author{Weijiong Zhang}
\affiliation{Shanghai Electro-Mechanical Engineering Institute, Shanghai 201109, China}

\author{Zhen Liu}
\affiliation{Department of Computer Science and Engineering, Shanghai Jiao Tong University, Shanghai 200240, China} 
\email{mxhu2018@sjtu.edu.cn, billcat2021@sina.com, liuzhen@sjtu.edu.cn}

\shortauthors{M. Hu \emph{et al.}}


\keywords{Lattice-Based; Ring Signature; Standard Model; Unclaimable Anonymity}

\begin{abstract}

Ring signatures enable a user to sign messages on behalf of an arbitrary set of users, called the ring, without revealing exactly which member of that ring actually generated the signature. The signer-anonymity property makes ring signatures have been an active research topic. Recently, Park and Sealfon (CRYPTO’19) presented an important anonymity notion named \emph{signer-unclaimability} and constructed a lattice-based ring signature scheme with unclaimable anonymity in the standard model, however, it did not consider the unforgeable w.r.t. adversarially-chosen-key attack (the public key ring of a signature may contain keys created by an adversary) and the signature size grows quadratically in the size of ring and message.

In this work, we propose a new lattice-based ring signature scheme with unclaimable anonymity in the standard model. In particular, our work improves the security and efficiency of Park and Sealfon's work, which is unforgeable w.r.t. adversarially-chosen-key attack, and the ring signature size grows linearly in the ring size. 

\end{abstract}

\maketitle

\section{Introduction}
Ring signatures, introduced by Rivest et al. \cite{RST01}, allow a signer to hide in a \textit{ring} of potential signers of which the signer is a member, without revealing which member actually produced the signature. Thereafter, ring signatures have been researched extensively \cite{AOS02,BSS02,Nao02,ZK02,BGL+03,HS03,DKN+04,LWW04,CWL+06,BKM06}. Among these works, it is worth mentioning that Bender et al. \cite{BKM06} presented a hierarchy of security and privacy models which were widely used in many works \cite{BK10,LLN+16,MS17,TSS+18,BHK+18,TKS+19,PS19,BDH+19,CGH+21}. Among these privacy models, the strongest one is anonymity w.r.t. full key exposure which allows that even if an adversary compromises the randomness used to produce these signing keys of all the ring members in a ring, the adversary cannot identify the real signers of past signatures. Recently, Park and Sealfon \cite{PS19} presented a stronger privacy notion named \textit{signer-unclaimability} and showed that signer-unclaimability implies the signer-anonymity w.r.t. full key exposure. The signer-unclaimability notion not only allows the adversary to compromise all the randomness used to produce the signing keys but also all the randomness used to produce the signatures. 


Another important line of research is the ring signature constructions from lattices \cite{BK10,PS19,CGH+21,WS11,MBB+13,NCJ14,ZHX+16,WZZ18,ESS+19,ESL+19,EZS+19,LAZ19,LYJ+19,MYV+20}, since lattice-based cryptography has attracted more attention due to its distinctive features especially quantum resistant. However, the security of the majority of these works relies on random oracle (ROM) heuristic. As shown by Katz \cite{Kat10} (Sect. 6.2.1), it will arouse concerns on the basic security of cryptosystems that rely on ROM. For instance, Leurent and Nguyen \cite{LN09} presented the attacks extracting the secret keys on several hash-then-sign type signature schemes (including the lattice-based signature \cite{GPV08}) when the underlying hash functions are modeled as random oracle. The first lattice-based ring signature in the standard model proposed by Brakerski and Tauman-Kalai \cite{BK10}, however, it did not consider the security notion i.e., unforgeable w.r.t. \emph{adversarially-chosen-key attack}, and the signature size grows quadratically in the ring and message size. Recently, Park and Sealfon \cite{PS19} adapted the work \cite{BK10} to a ring signature with unclaimable anonymity, which retained the merits of the standard model and standard lattice assumption, but was still not unforgeable w.r.t. adversarially-chosen keys attack and the signature size is still large. 

%

Overcoming these two weaknesses is very necessary for practical scenarios. Particularly, in cryptocurrencies, an attacker can create some public keys maliciously and put them into the blockchain as the normal ones, in this case, honest users may include these maliciously created keys in their rings to sign their transactions. Therefore, the security model must consider the attack in such a scenario, which is referred to as \emph{adversarially-chosen-key attack}. Moreover, the signature size can not grow too fast with the ring size since the number of members in cryptocurrencies is usually huge.

\subsection{Our Results}\label{OurResults}
To address the above concerns, we propose a new lattice-based ring signature scheme with unclaimable anonymity in the standard model based on standard lattice assumptions (SIS and LWE). In particular, our work \textit{simultaneously improves the security and efficiency} of Park and Sealfon's work \cite{PS19}. 

On the security, our ring signature is unforgeable w.r.t. adversarially-chosen-key attack, which is a stronger security notion than the one used in Park and Sealfon’s work \cite{PS19}; On the efficiency, we eliminate the dependency between the message length and ring signature size, i.e., make the ring signature size grows linearly in the ring size.

Table \ref{Table1} shows a comparison between our results in this work and the existing works on lattice-based ring signatures. We note that although the works \cite{ESS+19,ESL+19,EZS+19} achieved sublinear-size ring signatures, they rely on the random oracle heuristic and the anonymity is not unclaimable. And even though the work \cite{CGH+21} achieved logarithmic-size, their anonymity is not unclaimable and the construction employs many cost cryptographic building blocks and proof system which would likely render concrete instantiations inefficient for reasonable parameters.

\begin{table*}
\caption{Comparison with existing lattice-based ring signature schemes} 
\label{Table1}
\begin{threeparttable}
\vspace{0cm}
\begin{tabular}{l|l|l|l|l}
\thickhline
\begin{tabular}[c]{@{}l@{}}Lattice-based\\ring signature works \end{tabular}&\begin{tabular}[c]{@{}l@{}}Standard model~\end{tabular} &\begin{tabular}[c]{@{}l@{}}Unclaimable anonymity~\end{tabular} & \begin{tabular}[c]{@{}l@{}}Unforgeability w.r.t.\\adversarially-chosen-key attack\end{tabular}  & \begin{tabular}[c]{@{}l@{}}Signature size~~~~\end{tabular}   \\ \thickhline

\begin{tabular}[c]{@{}l@{}}\cite{WS11,NCJ14}\end{tabular}   
&$\surd$         &$\times$    	     &$\surd$, flaws\tnote{a}                  &Linear \\ \hline
                      
\begin{tabular}[c]{@{}l@{}}\cite{MBB+13,ZHX+16,WZZ18,LAZ19,LYJ+19,MYV+20}~\\\end{tabular} 
&$\times$        &$\times$           &$\surd$  			 			&Linear \\ \hline
                                   
\begin{tabular}[c]{@{}l@{}}\cite{ESS+19,ESL+19,EZS+19}\end{tabular}     
 &$\times$       &$\times$          &$\surd$                   			&Sub-linear  \\ \hline
                        
\cite{CGH+21}   
&$\surd$         &$\times$           &$\surd$                    			&Logrithmic   \\ \hline
 
\begin{tabular}[c]{@{}l@{}}\cite{BK10,PS19}\end{tabular}
 &$\surd$        &$\surd$            &$\times$                    			&Quadratic   \\ \hline

Our                  
&$\surd$         &$\surd$  	    &$\surd$                      			&Linear        \\ \thickhline
\end{tabular}

\noindent \begin{tablenotes}
        \footnotesize
        \item[a] The works \cite{WS11,NCJ14} are not unforgeable w.r.t. adversarially-chosen-key attack since the forger can trivially forge a signature by querying the signing oracle with his adversarially-chosen public keys ability (Refer to Sect. \ref{Approach} for the details of the attack). 
      \end{tablenotes}
    \end{threeparttable}
\vspace{-0.7cm}
\end{table*}

\subsection{Overview of Our Approach} \label{Approach}

To describe our approach, it is instructive to recall Park and Sealfon's work \cite{PS19}. In the PS scheme, there are $N$ users in a ring $\textsf{R}$, each user generates $\bold A^{(i)}$ and trapdoor $\bold S^{(i)}$ by trapdoor generation algorithm, and samples $2t$ ``message matching'' matrices $\{\bold A_{j,b}^{(i)}\}_{(j,b)\in[t]\times \{0,1\}}\xleftarrow{\$} \mathbb{Z}_q^{n\times m}$ two of them corresponding to each bit of the message. It additionally sample a vector $\bold y^{(i)}\in \mathbb{Z}_q^{n}$. Each user's verification key is $\textsf{vk}^{(i)}=(\bold A^{(i)},\{\bold A_{j,b}^{(i)}\}_{(j,b)\in[t]\times \{0,1\}},\bold y^{(i)})$, signing key is $\textsf{sk}^{(i)}=\bold S^{(i)}$. The ring $\textsf{R}=\{\textsf{vk}^{(1)},\dots,\textsf{vk}^{(N)}\}$. Let $\boldsymbol{\mu}=(\mu_1,\dots,\mu_t)\in \{0,1\}^t$ be the message. The signing procedure is one of the $N$ users who samples $\bold x^{(1)},\dots,\bold x^{(N)}\in \mathbb{Z}^{(t+1)\times m}$ such that the ring equation $\bar{\bold A}^{(1)}\bold x^{(1)} +\dots+ \bar{\bold A}^{(N)}\bold x^{(N)} = \bold y \pmod q$ holds, where $\bar{\bold A}^{(i)}=[\bold A^{(i)}| \bold A_{\boldsymbol{\mu}}^{(i)}]$, $\bold A_{\boldsymbol{\mu}}^{(i)}=[\bold A_{1,\mu_1}^{(i)}|\dots|\bold A_{t, \mu_t}^{(i)}]$, and $\bold y\leftarrow \{\bold y^{(i)}\}_{i\in[N]}$ selected in lexicographically first way. Finally, output $(\bold x^{(1)},\dots,\bold x^{(N)})$ as the signature. The verification procedure check if the signature is well-formed and if the ring equation holds, accept, otherwise reject.

In this setting, an adversary can easily forge a signature by querying the signing oracle with an adversarially-formed ring such as $\textsf{R}^*=\big\{\textsf{vk}^{(1)},\textsf{vk}^{(2)},c_1\textsf{vk}^{(1)},c_2\textsf{vk}^{(2)}\big\}$ where $c_1, c_2$ are constants and 
\begin{gather*}
c_1\textsf{vk}^{(1)}=\Big(c_1\bold A^{(1)},\big\{c_1\bold A_{j,b}^{(1)}\big\}_{(j,b)\in[t]\times \{0,1\}},\bold y^{(1)}\Big)\\
c_2\textsf{vk}^{(2)}=\Big(c_2\bold A^{(2)},\big\{c_2\bold A_{j,b}^{(2)}\big\}_{(j,b)\in[t]\times \{0,1\}},\bold y^{(2)}\Big) 
\end{gather*}
Assume $(\bold x^{(1)},\bold x^{(2)},\bold x^{(3)}, \bold x^{(4)})$ is the replied signature for that query, $\boldsymbol{\mu}$ is the queried message, then the adversary can immediately obtain a forgery signature $(\bold x^{(1)}+c_1\bold x^{(3)},\bold x^{(2)}+c_2\bold x^{(4)})$ for the ring $(\textsf{vk}^{(1)},\textsf{vk}^{(2)})$ since the ring equation $\bar{\bold A}^{(1)}(\bold x^{(1)}+c_1\bold x^{(3)}) + \bar{\bold A}^{(2)}(\bold x^{(2)}+c_2\bold x^{(4)}) = \bold y \pmod q$ holds for $\bold y\leftarrow \{\bold y^{(1)},\bold y^{(2)}\}$.

To resolve the above problem, we use the key-homomorphic evaluation algorithm that developed from \cite{GSW13,BV14,BGG+14} to evaluate circuits of a $\textsf{PRF}$. Even though the method is inspired by the standard signature work \cite{BL16}, it is essentially different in ring signature setting, since its privacy and security requirements is more complex than standard signature. For our construction, we borrow the idea from the unforgeability simulation of \cite{BL16}. In our setting, the verification key of each user is $\textsf{vk}^{(i)}=\big(\bold A^{(i)},(\bold A_0^{(i)},\bold A_1^{(i)}),\{\bold B_{j}^{(i)}\}_{j\in[k]},(\bold C_0^{(i)},\bold C_1^{(i)})\big)$ and the signing key is $\textsf{sk}^{(i)}=(\bold S^{(i)}, \bold k^{(i)})$, where $(\bold A^{(i)},\bold S^{(i)})$ are generated by trapdoor generation algorithm, $\bold k^{(i)}\in\{0,1\}^k$ is a \textsf{PRF} key, the remained matrices are used to construct the homomorphic evaluated matrix $\bold A_{C_\textsf{PRF},\boldsymbol{\mu}}^{(i)}$ in signing phase.

In the signing phase, the signer first construct a homomorphic evaluated matrix $\bold A_{C_\textsf{PRF},\boldsymbol{\mu}}^{(i)}$ which determined by the signer's $\textsf{PRF}$ key, where ${C_\textsf{PRF}}$ denote \textsf{NAND} Boolean circuit expression of the PRF function. Then sample a $2Nm$-dimensional vector $\bold x'=\big((\bold x^{(1)})^\top|\dots|(\bold x^{N})^\top\big)$ such that the following ring equation holds
\begin{equation}
\sum_{i\in[N]}\big[\bold A^{(i)}|\bold A_{C_\textsf{PRF},\boldsymbol{\mu}}^{(i)}\big]\bold x^{(i)}=\bold 0\pmod q
\label{Eq1}
\end{equation}
Finally, output $\bold x'$ as the signature. In the verification phase, check if the input signature is well-formed and the ring equation (\ref{Eq1}) holds.

 In this setting, it is effective against the adversary to forge signatures by adversarially forming a ring. Because $\bold A_{C_\textsf{PRF},\boldsymbol{\mu}}^{(i)}$ can not be predetermined i.e., is unpredictable for the adversary since $\bold A_{C_\textsf{PRF},\boldsymbol{\mu}}^{(i)}$ generated based on the $\textsf{PRF}$ key that selected during signing phase. For the ring signature size, we process the message by puncturing it into specific matrices and homomorphically evaluate them using $C_\textsf{PRF}$ which finally outputs only one matrix as result, rather than directly assigning a matrix to each message bit and finally outputs a concatenation of these matrices as \cite{PS19}, which eliminate the dependency between the message length and ring signature size and therefore our ring signature size is only linear with ring size.

\section{Definitions}\label{OurDefinitions}
In this section, we define the ring signature system and then formalize the security and privacy models.

\subsection{Algorithm Definition}\label{OurDefinitions}

\begin{definition}[Ring Signature]
A ring signature {\rm \textsf{RS}} scheme consists of the following algorithms: \label{RS}
\begin{itemize}
    \item {\rm $\textsf{Setup}(1^n)\rightarrow \textsf{PP}$}. This is a probabilistic algorithm. On input a security parameter $n$, the algorithm outputs the public parameters {\rm $\textsf{PP}$}.
    
    {\rm \textsf{The public parameters} $\textsf{PP}$ \textsf{are common parameters used by all participants in the system, for example, the message space $\mathcal{M}$, the signature space, etc.} $\textsf{In the following,}$ $\textsf{PP}$ $\textsf{are implicit input parameters to every algorithm.}$}
    
    \item {\rm $\textsf{KeyGen}()\rightarrow (\textsf{vk,sk})$}. This is a probabilistic algorithm. The algorithm outputs a verification key {\rm $\textsf{vk}$} and a signing key {\rm $\textsf{sk}$}.
    
    {\rm \textsf{Any ring member can run this algorithm to generate a pair of verification keys and signing keys.}}
    
    \item {\rm $\textsf{Sign}(\mu,\textsf{R},\textsf{sk})\rightarrow \Sigma$}. This is a probabilistic algorithm. On input a message $\mu\in \mathcal{M}$, a ring of verification keys {\rm $\textsf{R}=(\textsf{vk}^{(1)},\dots,\textsf{vk}^{(N)})$}\footnote{Below we regard the verification key ring as an ordered set, namely, it consists of a set of verification keys, and when it is used in \textsf{Sign} and \textsf{Ver} algorithms, the verification keys are ordered and each one has an index.} and a signing key {\rm \textsf{sk}}. Assume that (1) the input signing key {\rm \textsf{sk}} and the corresponding verification key {\rm \textsf{vk}} is a valid key pair output by {\rm\textsf{KeyGen}} and {\rm \textsf{vk}} $ \in $ {\rm\textsf{R}}, (2) the ring size $|${\rm\textsf{R}}$|\geq 2$, (3) each verification key in ring {\rm$\textsf{R}$} is distinct. The algorithm outputs a signature $\Sigma$.
    
    \item {\rm $\textsf{Ver}(\mu,\textsf{R},\Sigma)\rightarrow 1/0$}. This is a deterministic algorithm. On input a message $\mu$, a ring of verification keys {\rm $\textsf{R}=(\textsf{vk}^{(1)},\dots,\textsf{vk}^{(N)})$} and a signature $\Sigma$, the algorithm outputs 1 if the signature is valid, or 0 if the signature is invalid.
\end{itemize}
\end{definition}

\noindent\emph{Remark:} Note that it is open on whether the \textsf{Sign} algorithm is probabilistic or deterministic, which may depend on the concrete construction.

\noindent$\textbf{Correctness}$. \textit{A {\rm $\textsf{RS}$} scheme is correct, if for all $n\in \mathbb{N}$, all $N=\mathrm{poly}(n)$, all $i\in[N]$, all messages $\mu\in \mathcal{M}$, any {\rm $\textsf{PP}\leftarrow \textsf{Setup}(1^n)$} as implicit input parameter to every algorithm, any $N$ pairs {\rm $\{\textsf{vk}^{(i)},\textsf{sk}^{(i)}\}_{i\in[N]}\leftarrow \textsf{KeyGen}()$} and any {\rm $\Sigma\leftarrow\textsf{Sign}(\mu,\textsf{R},\textsf{sk}^{(i)})$} where {\rm $\textsf{R}=\{\textsf{vk}^{(1)},\dots,\textsf{vk}^{(N)}\}$}, it holds that}
\begin{center}
    {\rm $\mathrm{Pr}\big[\textsf{Ver}(\mu,\textsf{R},\Sigma)=1\big]=1-\mathrm{negl}(n)$}
\end{center}
\textit{where the probability is taken over the random coins used by {\rm \textsf{Setup}}, {\rm \textsf{KeyGen}}, and {\rm \textsf{Sign}}.}

\subsection{Security and Privacy Models}\label{OurDefinitions}

Below we define the security and privacy models for RS. In both models, we give the randomness used in {\rm $\textsf{Setup}$} to the adversary, which implies the {\rm $\textsf{Setup}$} algorithm is public, does not rely on a trusted setup that may incur concerns on the existence of trapdoors hidden in the output parameters. 

The security model i.e., unforgeability w.r.t. adversarially-chosen-key attack captures that only the ring member knowing the secret key for some verification key in a ring can generate a valid signature with respect to the ring, even though existing adversary is allowed to arbitrarily add some verification keys in the ring when querying the signing oracle. In other words, assuming there is a ring signature system that satisfies unforgeability w.r.t. adversarially-chosen-key attack, even some verification keys in the system were maliciously generated by an adversary, and these keys were used to issue signatures by some honest ring members, the unforgeability still holds.

The privacy model i.e., signer-unclaimability captures that given a valid signature with respect to a ring of verification keys and the randomness used to produce the signature, even though existing adversary obtained all the randomness that used to produce all the signing keys in the system, the adversary still can not identify the signer’s verification key out of the ring. In other words, assuming an RS system with unclaimable anonymity revealed all ring members' signing keys, and allowed the adversary to obtain the singing randomness of any valid signature in the system, the signer-anonymity still holds.

\noindent$\textbf{Unforgeability}$. \textit{A {\rm $\textsf{RS}$} scheme is unforgeability against adversarially-chosen-key attack {\rm (\textsf{UnfAdvKey})}, if for any PPT adversary $\mathcal{A}$, it holds that $\mathcal{A}$ has at most negligible advantage in the following experiment with a challenger $\mathcal{C}$.}
\begin{itemize}
    \item \textit{{\rm $\textbf{Setup.}$} $\mathcal{C}$ generates {\rm $\textsf{PP}\leftarrow \textsf{Setup}(1^n;\gamma_{\textsf{st}})$} and {\rm $(\textsf{vk}^{(i)},\textsf{sk}^{(i)})\leftarrow \textsf{KeyGen}()$} for all $i\in[N]$, where $N=\mathrm{poly}(n)$ and $\gamma_{\textsf{st}}$ is the randomness used in {\rm $\textsf{Setup}$}. $\mathcal{C}$ sets {\rm $\textsf{S}=\{\textsf{vk}^{(i)}\}_{i\in[N]}$} and initializes an empty set {\rm\textsf{L}}. Finally, $\mathcal{C}$ sends {\rm ($\textsf{PP}$, $\textsf{S},\gamma_{\textsf{st}}$)} to $\mathcal{A}$.}
    
    
    \item \textit{{\rm $\textbf{Probing Phase.}$} $\mathcal{A}$ is given access to a }\textit{signing oracle {\rm $\textsf{OSign}(\cdot,\cdot,\cdot)$}:}\textit{On input a message $\mu\in\mathcal{M}$, a ring of verification keys {\rm \textsf{R}} and an index $s\in[N]$ such that {\rm $\textsf{vk}^{(s)}\in \textsf{R}\bigcap \textsf{S}$}, this oracle returns {\rm $\Sigma\leftarrow\textsf{Sign}(\mu,\textsf{R},\textsf{sk}^{(s)})$} and adds the tuple {\rm$(\mu,\textsf{R},\Sigma)$} to {\rm\textsf{L}}.}
    
     \textsf{Note that it only requires that the $\textsf{vk}^{(s)}$ is in \textsf{S} without requiring that $\textsf{R}\subseteq \textsf{S}$. This captures that $\mathcal{A}$ can obtain the signing oracle of its choice, in which the queried ring \textsf{R} may contain verification keys that are created by $\mathcal{A}$ (referred to as adversarially-chosen-key attack).}
    
    \item \textit{{\rm $\textbf{Forge.}$} $\mathcal{A}$ outputs a signature {\rm $(\mu^*,\textsf{R}^*,\Sigma^*)$} and succeeds if (1) {\rm $\textsf{Ver}(\mu^*,\textsf{R}^*,\Sigma^*)=1$}, (2) {\rm $\textsf{R}^*\subseteq \textsf{S}$}, and (3) {\rm $(\mu^*,\textsf{R}^*,$ $\Sigma^*)\notin \textsf{L}$}.}
    \end{itemize}

\noindent$\textbf{Anonymity}$. \textit{A {\rm $\textsf{RS}$} scheme is signer-unclaimability, if for any PPT adversary $\mathcal{A}$, it holds that $\mathcal{A}$ has at most negligible advantage in the following experiment with a challenger $\mathcal{C}$.}
\begin{itemize}
    \item $\textbf{Setup}$. \textit{$\mathcal{C}$ generates {\rm $\textsf{PP}\leftarrow \textsf{Setup}(1^n;\gamma_{\textsf{st}})$} and {\rm $(\textsf{vk}^{(i)},\textsf{sk}^{(i)})\leftarrow \textsf{KeyGen}(\gamma_{\textsf{kg}}^{(i)})$} for all $i\in[N]$, where $N=\mathrm{poly}(n)$ and {\rm $(\gamma_{\textsf{st}}, \{\gamma_{\textsf{kg}}^{(i)}\}_{i\in [N]})$} are randomness used in {\rm $\textsf{Setup}$} and {\rm $\textsf{KeyGen}$}, respectively. $\mathcal{C}$ sets {\rm $\textsf{S}=\{\textsf{vk}^{(i)}\}_{i\in[N]}$}. Finally, $\mathcal{C}$ sends {\rm $(\textsf{PP}$, $\textsf{S},\gamma_{\textsf{st}},\{\gamma_{\textsf{kg}}^{(i)}\}_{i\in [N]})$} to $\mathcal{A}$.}
    
    \item $\textbf{Challenge}$. \textit{$\mathcal{A}$ provides a challenge {\rm $(\mu^*,\textsf{R}^*,s_0^*,s_1^*)$} to the challenger such that $s_0^*, s_1^*\in[N]$, $s_0^*\neq s_1^*$ and {\rm $\textsf{vk}^{(s_0^*)},\textsf{vk}^{(s_1^*)} \in \textsf{S}\bigcap \textsf{R}^*$}. $\mathcal{C}$ chooses a random bit $b\in\{0,1\}$ and computes the signature $\Sigma^*$ by invoking {\rm $\Sigma^*\leftarrow \textsf{Sign}(\mu^*,\textsf{R}^*,\textsf{sk}^{(s_b^*)}; \gamma_{\textsf{sign}})$}. Finally, returns {\rm$(\Sigma^*,\gamma_{\textsf{sign}})$} to $\mathcal{A}$.}
       
    \textsf{Note that we not only give $\mathcal{A}$ the randomness $\gamma_{\textsf{st}}$ and $\{\gamma_{\textsf{kg}}^{(i)}\}_{i\in [N]}$ in $\textsf{Setup}$ phase, but also give $\mathcal{A}$ the signature $\Sigma^*$ and the corresponding signing randomness $\gamma_{\textsf{sign}}$ that used to produce $\Sigma^*$ in $\textsf{Challenge}$ phase (referred to as signer-unclaimability).}
      
    \item $\textbf{Guess}$. \textit{$\mathcal{A}$ outputs a guess $b'$. If $b'=b$, $\mathcal{C}$ outputs 1, otherwise 0.}
    
\end{itemize}

\section{Preliminary}\label{Preliminary}

In this section, we first review some lattice-based backgrounds, then we review the key-homomorphic evaluation algorithm which we will use as a building block for our construction.

\noindent$\textbf{Notation}$. We denote vectors as lower-case bold letters (e.g. $\bold{x}$), and matrices by upper-case bold letters (e.g. $\bold A$). We say that a function in $n$ is $\textit{negligible}$, written negl$(n)$, if it vanishes faster than the inverse of any polynomial in $n$. We say that a probability $p(n)$ is $\textit{overwhelming}$ if $1-p(n)$ is negligible. We denote the horizontal concatenation of two matrices $\bold A$ and $\bold B$ as $\bold A|\bold B$.

\noindent$\textbf{Matrix Norms}$. For a vector $\bold x$, we let $\lVert \bold x \rVert$ denote its $l_2$-norm. For a matrix $\bold A$ we denote two matrix norms: $\lVert \bold A \rVert$ denotes the $l_2$ length of the longest column of $\bold A$. $\lVert \bold {\widetilde{A}} \rVert$ denotes the result of applying Gram-Schmidt orthogonalization to the columns of $\bold A$.

\noindent$\textbf{Lattices and Gaussian Distributions}$. Let $m\in\mathbb{Z}$ be a positive integer and $\bold{\Lambda}\subset \mathbb{R}^m$ be an $m$-dimensional full-rank lattice formed by the set of all integral combinations of $m$ linearly independent basis vectors $\bold B=(\bold b_1,\dots,\bold b_m)\subset \mathbb{Z}^{m}$, i.e., $\bold{\Lambda}=\mathcal{L}(\bold B)=\big\{\bold B\bold c=\sum\nolimits_{i=1}^m c_i\bold b_i:\bold c\in\mathbb{Z}^m\big\}$. For positive integers $n$, $m$, $q$, a matrix $\bold A\in \mathbb{Z}_q^{n\times m}$, and a vector $\bold y \in \mathbb{Z}_q^m$, the $m$-dimensional integer lattice $\bold{\bold{\Lambda}}_q^{\perp}(\bold A)$ is defined as $\bold{\bold{\Lambda}}_q^{\perp}(\bold A)=\{\bold x\in \mathbb{Z}^m:\bold{Ax}=\bold 0\pmod q\}$. $\bold{\bold{\Lambda}}_q^{\bold y}(\bold A)$ is defined as $\bold{\bold{\Lambda}}_q^{\bold y}(\bold A)=\{\bold x\in \mathbb{Z}^m:\bold{Ax}=\bold y\pmod q\}$. For a vector $\bold c\in\mathbb{R}^m$ and a positive parameter $\sigma\in\mathbb{R}$, define $\rho_{\sigma,\bold c}(\bold x)={\rm exp}(-\pi\lVert \bold x-\bold c \rVert/\sigma^2)$ and $\rho_{\sigma,\bold c}(\bold{\Lambda})=\sum\nolimits_{\bold x\in\bold{\Lambda}} \rho_{\sigma,\bold c}(\bold x)$. For any $\bold y\in \bold{\Lambda}$, define the discrete Gaussian distribution over $\bold{\Lambda}$ with center $\bold c$ and parameter $\sigma$ as $\mathcal{D}_{\bold{\Lambda},\sigma,\bold c}(\bold y)=\rho_{\sigma,\bold c}(\bold y)/\rho_{\sigma,\bold c}( \bold{\Lambda})$. For simplicity, $\rho_{\sigma,\bold 0}$ and $\mathcal{D}_{\bold{\Lambda},\sigma,\bold 0}$ are abbreviated as $\rho_{\sigma}$ and $\mathcal{D}_{\bold{\Lambda},\sigma}$, respectively.

The following Lemma \ref{Lemma1} bounds the length of a discrete Gaussian vector with a sufficiently large Gaussian parameter.

\noindent \begin{lemma}[{\rm \cite{MR07}}]
\noindent For any lattice $\bold{\Lambda}$ of integer dimension $m$ with basis $\bold B$, $\bold c\in \mathbb{R}^m$ and Gaussian parameter $\sigma>\lVert \widetilde{\bold{B}} \rVert \cdot \omega(\sqrt{{\rm log}\ m})$, we have $\mathrm{Pr}[\lVert \bold x-\bold c\rVert > \sigma\sqrt m : \bold x\leftarrow \mathcal{D}_{\bold{\Lambda},\sigma,\bold c}]\leq \mathrm{negl}(n)$.
\label{Lemma1} 
\end{lemma}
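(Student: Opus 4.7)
The plan is to derive this tail bound via the smoothing parameter framework. First I would invoke the standard upper bound $\eta_\epsilon(\bold{\Lambda}) \leq \lVert \widetilde{\bold B}\rVert \cdot \sqrt{\ln(2m(1+1/\epsilon))/\pi}$ and choose a negligible $\epsilon = \epsilon(n)$, for instance $\epsilon = 2^{-m}$; the hypothesis $\sigma > \lVert \widetilde{\bold B}\rVert \cdot \omega(\sqrt{\log m})$ then forces $\sigma \geq \eta_\epsilon(\bold{\Lambda})$, placing the parameter safely above the smoothing threshold where the discrete Gaussian closely mimics a scaled continuous one.

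Next I would apply Banaszczyk's tail inequality, which asserts that for any lattice coset and parameter $\sigma$, the $\rho_\sigma$-mass of points of norm exceeding $\sigma\sqrt{m}$ is at most $2^{-m}$ times the total mass. Translating by $\bold c$ rephrases this as $\rho_{\sigma,\bold c}\bigl(\{\bold x \in \bold{\Lambda} : \lVert \bold x - \bold c\rVert > \sigma\sqrt{m}\}\bigr) \leq 2^{-m} \cdot \rho_{\sigma,\bold c}(\bold{\Lambda})$. Dividing by $\rho_{\sigma,\bold c}(\bold{\Lambda})$ converts this directly into a probability bound under $\mathcal{D}_{\bold{\Lambda},\sigma,\bold c}$ of at most $2^{-m}$, which is negligible in $n$ whenever $m=\mathrm{poly}(n)$, as will be the case in our ring signature parameters.

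The main obstacle I anticipate is the shifted form of Banaszczyk's inequality: the classical statement is for centered Gaussians summed over a lattice coset $\bold{\Lambda}-\bold c$, so a short unfolding (or appeal to the shifted version in Micciancio--Regev) is needed to align it with the ``Gaussian at center $\bold c$ over $\bold{\Lambda}$'' formulation used in the lemma. A secondary bookkeeping point is choosing $\epsilon$ small enough that the smoothing comparison $\rho_{\sigma,\bold c}(\bold{\Lambda}) \in [\tfrac{1-\epsilon}{1+\epsilon},1]\cdot\rho_\sigma(\bold{\Lambda})$ is effectively tight, while keeping $\sqrt{\ln(1/\epsilon)}$ within the $\omega(\sqrt{\log m})$ slack afforded by the hypothesis; any $\epsilon = 2^{-\Theta(m)}$ achieves both simultaneously, so this is routine.
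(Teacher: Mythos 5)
The paper gives no proof of this lemma --- it is quoted as a known result from Micciancio--Regev --- and your reconstruction follows exactly the proof of Lemma 4.4 of that paper: bound the smoothing parameter via $\eta_\epsilon(\bold{\Lambda}) \leq \lVert \widetilde{\bold B}\rVert \cdot \sqrt{\ln(2m(1+1/\epsilon))/\pi}$, then apply Banaszczyk's tail inequality to the shifted Gaussian mass. That architecture is correct and is the intended argument.

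There is, however, one concrete step that fails as written: the choice $\epsilon = 2^{-m}$. With that choice the smoothing bound gives only $\eta_\epsilon(\bold{\Lambda}) \leq \lVert \widetilde{\bold B}\rVert \cdot \Theta(\sqrt{m})$, and the hypothesis $\sigma > \lVert \widetilde{\bold B}\rVert \cdot \omega(\sqrt{\log m})$ does \emph{not} place $\sigma$ above that threshold; your closing remark that ``any $\epsilon = 2^{-\Theta(m)}$ achieves both simultaneously'' is false, since $\sqrt{\ln(1/\epsilon)} = \Theta(\sqrt{m})$ lies far outside the $\omega(\sqrt{\log m})$ slack. The repair is to notice that the $2^{-m}$ in the conclusion does not need to come from $\epsilon$ at all: Banaszczyk's inequality already supplies the factor $2^{-m}$, and $\epsilon$ enters the final probability only through the multiplicative correction $(1+\epsilon)/(1-\epsilon)$, i.e., for any $\sigma \geq \eta_\epsilon(\bold{\Lambda})$ one has $\mathrm{Pr}[\lVert \bold x - \bold c\rVert > \sigma\sqrt{m}] \leq \frac{1+\epsilon}{1-\epsilon}\cdot 2^{-m}$. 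So it suffices to take $\epsilon$ a small constant (or any negligible $\epsilon$ with $\ln(1/\epsilon) = O(\log m)$), for which $\eta_\epsilon(\bold{\Lambda}) \leq \lVert \widetilde{\bold B}\rVert \cdot O(\sqrt{\log m}) < \sigma$ does follow from the hypothesis, and the resulting bound $\frac{1+\epsilon}{1-\epsilon}\cdot 2^{-m} \leq 2^{-m+2}$ is negligible in $n$ because $m = \mathrm{poly}(n) \geq n$. With that correction your argument coincides with the cited proof.
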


The following generalization of the leftover hash lemma is needed for our security proof.

\begin{lemma}[{\rm \cite{ABB10a}}]
Suppose that $m>(n+1){\rm log}~q+\omega({\rm log}~n)$ and that $q>2$ is prime. Let $\bold R$ be an $m\times k$ matrix chosen uniformly in $\{1,-1\}^{m\times k}~ {\rm mod}~q$ where $k=k(n)$ is polynomial in $n$. Let $\bold A$ and $\bold B$ be matrices chosen uniformly in $\mathbb{Z}_q^{n\times m}$ and $\mathbb{Z}_q^{n\times k}$, respectively. Then, for all vectors $\bold w$ in $\mathbb{Z}_q^{m}$, the distribution $(\bold A, \bold{AR}, \bold{R}^\top\bold w)$ is statistically close to the distribution $(\bold A, \bold{B}, \bold{R}^\top\bold w)$.
\label{LeftOverHash} 
\end{lemma}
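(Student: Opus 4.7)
The plan is to reduce the statement to the standard Leftover Hash Lemma (LHL), applied one column of $\bold R$ at a time. Write $\bold R=[\bold r_1|\cdots|\bold r_k]$ and $\bold B=[\bold b_1|\cdots|\bold b_k]$; the $\bold r_i$ are i.i.d.\ uniform in $\{-1,1\}^m$ and the $\bold b_i$ are i.i.d.\ uniform in $\mathbb{Z}_q^n$. Since $\bold A\bold R$ and $\bold R^\top\bold w$ are formed column-by-column and the columns of $\bold R$ are independent, a hybrid argument over the $k=\mathrm{poly}(n)$ columns reduces the task to the single-column statement: for uniform $\bold r\in\{-1,1\}^m$, uniform $\bold A\in\mathbb{Z}_q^{n\times m}$, and independent uniform $\bold b\in\mathbb{Z}_q^n$, the tuple $(\bold A,\bold A\bold r,\bold r^\top\bold w)$ is statistically close to $(\bold A,\bold b,\bold r^\top\bold w)$. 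The hybrid costs a factor $k$ in statistical distance, which will remain negligible.

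For the single-column claim, I would first verify that the family $\{h_{\bold A}\}$ given by $h_{\bold A}(\bold r)=\bold A\bold r$ is a universal hash family from $\{-1,1\}^m$ to $\mathbb{Z}_q^n$. This uses $q>2$ prime: for any distinct $\bold r\neq \bold r'$ in $\{-1,1\}^m$, the difference $\bold r-\bold r'$ has at least one coordinate in $\{\pm 2\}$, and since $2$ is invertible modulo $q$, the collision probability over uniform $\bold A$ equals $q^{-n}$. Next, I would bound the conditional min-entropy of $\bold r$: since $\bold r$ is uniform on $\{-1,1\}^m$ and $\bold r^\top\bold w\in\mathbb{Z}_q$ takes at most $q$ values, the (average) conditional min-entropy satisfies $H_\infty(\bold r\mid \bold r^\top\bold w)\geq m-\log q$. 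The average-case LHL then asserts that once this conditional min-entropy exceeds $n\log q+2\log(1/\epsilon)$, the distribution $(\bold A,\bold A\bold r,\bold r^\top\bold w)$ is within statistical distance $\epsilon$ of $(\bold A,\bold b,\bold r^\top\bold w)$. The hypothesis $m>(n+1)\log q+\omega(\log n)$ provides exactly the slack needed to take $\epsilon$ negligible in $n$.

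The step I expect to be most delicate is the conditioning on $\bold r^\top\bold w$, since the textbook LHL is stated for unconditioned sources. The cleanest remedy is to invoke the average-case LHL (in the style of Dodis--Ostrovsky--Reyzin--Smith), which directly accepts a conditional min-entropy bound. A more elementary alternative is to integrate over the value $c=\bold r^\top\bold w$: for every $c$ attained with non-negligible probability, the residual distribution of $\bold r$ retains min-entropy at least roughly $m-\log q$, so the classical LHL applies pointwise in $c$, while the rare $c$'s with abnormally low residual entropy contribute only negligibly by a union bound. Either route yields the single-column closeness, and composing through the column-wise hybrid completes the proof.
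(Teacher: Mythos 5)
The paper does not prove this lemma itself but imports it verbatim from Agrawal--Boneh--Boyen, whose proof is exactly the argument you give: a column-wise hybrid reducing to the single-column case, handled by the average-case (Dodis et al.) leftover hash lemma applied to the universal family $h_{\bold A}(\bold r)=\bold A\bold r$, with the conditioning on $\bold r^\top\bold w$ costing at most $\log q$ bits of min-entropy so that $m>(n+1)\log q+\omega(\log n)$ supplies the required slack. Your proposal is correct and coincides with the standard proof of the cited result.
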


The security of our \textsf{RS} construction is based on the following Small Integer Solution (\textsf{SIS}) assumption and the security of PRF.  


\begin{definition}[SIS Assumption {\rm \cite{GPV08,MR07}}]
Let $q$ and $\beta$ be functions of $n$. An instance of the {\rm $\textsf{SIS}_{q,\beta}$} problem is a uniformly random matrix $\bold A\xleftarrow{\$} \mathbb{Z}_q^{n\times m}$ for any desired $m = \mathrm{poly}(n)$. The goal is to find a nonzero integer vector $\bold x\in \mathbb{Z}^m$ such that $\bold{Ax}=\bold 0 \pmod q$ and $\lVert \bold x \rVert \leq \beta$. \\
$~~~~~$For $\beta=\mathrm{poly}(n)$, $q\geq \beta\cdot \omega(\sqrt{n~{\rm log}~n})$, no (quantum) algorithm can solve {\rm $\textsf{SIS}_{q,\beta}$} problem in polynomial time.
\label{DefofSIS}
\end{definition}

\begin{definition}[Pseudorandom Functions]\label{DefofPRF}
For a security parameter $n>0$, let $k=k(n)$, $t=t(n)$ and $c=c(n)$. A pseudorandom function {\rm $\textsf{PRF}:\{0,1\}^k\times \{0,1\}^t\rightarrow \{0,1\}^c$} is an efficiently computable, deterministic two-input function where the first input, denoted by $K$, is the key. Let $\Omega$ be the set of all functions that map $\ell$ bits strings to $c$ bits strings. There is a negligible function {\rm $\textsf{negl}(n)$} such that:
\begin{center}
    {\rm $\Big| \mathrm{Pr}\big[\mathcal{A}^{\textsf{PRF}(K,\cdot)}(1^n)=1\big]-\mathrm{Pr}\big[\mathcal{A}^{F(\cdot)}(1^n)=1\big]\Big| \leq \textsf{negl}(n)$}
\end{center}
where the probability is taken over a uniform choice of key $K\xleftarrow{\$}\{0,1\}^k$, $F\xleftarrow{\$} \Omega$, and the randomness of $\mathcal{A}$. 
\end{definition}

\noindent \textit{Remark:} The PRF that we employed is Lai et al.'s work \cite{LLW20}, which is based on standard lattice assumption i.e., learning with errors (LWE) assumption.

\noindent \textbf{Algorithms on Lattices.} Our work will use the following lattice algorithms.

\begin{lemma}[{\rm \textsf{TrapGen}} Algorithm {\rm \cite{AP11}}]
Let $n\geq 1, q\geq 2$ and $m=O(n~{\rm log}~q)$ be integers. There is a probabilistic algorithm {\rm $\textsf{TrapGen}(1^n,1^m,q)$} that outputs a matrix $\bold A\in \mathbb{Z}_q^{n\times m}$ and a trapdoor $\bold {S_A}\subset \bold{\bold{\Lambda}}_q^{\perp}(\bold A)$, the distribution of $\bold A$ is statistically close to the uniform distribution over $\mathbb{Z}_q^{n\times m}$ has $\lVert \bold{\widetilde{\bold{S_A}}} \rVert\leq O(\sqrt{n\ {\rm log}\ q})$ and $\lVert \bold{S_A} \rVert\leq O(n~{\rm log}~q)$ with all but negligible probability in $n$. 
\label{TrapGen}
\end{lemma}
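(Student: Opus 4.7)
The plan is to follow the Alwen--Peikert gadget-trapdoor framework, which reduces the construction of $(\bold A, \bold S_A)$ to the existence of a fixed ``gadget'' matrix $\bold G \in \mathbb{Z}_q^{n \times w}$ whose perp-lattice already admits an explicit short basis. First I would fix $\bold G$ with $w = n\lceil \log q \rceil$: its rows encode the base-$2$ representation so that $\bold{\Lambda}_q^\perp(\bold G)$ has an explicit (essentially bidiagonal) basis $\bold T_G$ with $\lVert \widetilde{\bold T_G} \rVert \leq O(1)$ and $\lVert \bold T_G \rVert \leq O(\sqrt{\log q})$. Then, splitting $m = \bar m + w$ with $\bar m > (n+1)\log q + \omega(\log n)$, I would sample $\bold{\bar A} \xleftarrow{\$} \mathbb{Z}_q^{n \times \bar m}$ and a short $\bold R \in \{-1,1\}^{\bar m \times w}$, and output
\[
\bold A := [\,\bold{\bar A}\mid \bold G - \bold{\bar A}\bold R\,] \in \mathbb{Z}_q^{n \times m}.
\]
By Lemma~\ref{LeftOverHash} applied column by column, the right-hand block $\bold G - \bold{\bar A}\bold R$ is statistically close to uniform given $\bold{\bar A}$, so $\bold A$ is statistically close to uniform on $\mathbb{Z}_q^{n \times m}$, which is the first of the three claims.

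Next I would exhibit the trapdoor. By construction, $\bold A \begin{pmatrix}\bold R \\ \bold I_w\end{pmatrix} = \bold G \pmod q$, so the linear map $\bold z \mapsto \begin{pmatrix}\bold R \\ \bold I_w\end{pmatrix}\bold z$ sends $\bold{\Lambda}_q^\perp(\bold G)$ into $\bold{\Lambda}_q^\perp(\bold A)$. Applying this map to each column of $\bold T_G$ and adjoining the ``free'' kernel vectors arising from the columns of $\bold{\bar A}$ (which span the $\bar m$-dimensional trivial block of the kernel together with the identity block), I assemble a full basis $\bold S_A$ of $\bold{\Lambda}_q^\perp(\bold A)$. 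This step is purely linear-algebraic and uses only the defining identity $\bold A \begin{pmatrix}\bold R \\ \bold I_w\end{pmatrix} = \bold G$.

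Finally I would bound the norms. A sub-Gaussian tail bound (or Lemma~\ref{Lemma1} applied to $\bold R$) gives $\lVert \bold R \rVert \leq O(\sqrt{n \log q})$ with overwhelming probability, so each column of $\bold S_A$ has Euclidean length at most $\lVert \bold R \rVert \cdot \lVert \bold T_G \rVert + \lVert \bold T_G \rVert \leq O(n \log q)$, matching the bound on $\lVert \bold S_A \rVert$. The main obstacle is the Gram--Schmidt bound: a naive operator-norm argument would lose an extra factor of $\lVert \bold R \rVert$ and give only $O(n \log q)$ rather than $O(\sqrt{n \log q})$. The remedy is to orthogonalize in the order dictated by the identity block $\bold I_w$ appearing at the bottom of $\begin{pmatrix}\bold R \\ \bold I_w\end{pmatrix}\bold T_G$: because this block pins each new basis vector to a fresh coordinate and because $\bold T_G$ is (essentially) triangular with $O(1)$-length Gram--Schmidt vectors, each successive orthogonalization step only needs to cancel components that were already short, so the resulting $\widetilde{\bold S_A}$ inherits the length of $\widetilde{\bold T_G}$ up to constants and contributes at most $O(\sqrt{n \log q})$ from the trivial block dimension. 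This yields $\lVert \widetilde{\bold S_A} \rVert \leq O(\sqrt{n \log q})$ and completes the three claimed bounds.
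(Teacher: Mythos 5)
First, a framing point: the paper does not prove Lemma~\ref{TrapGen} at all --- it is imported from \cite{AP11} as a black-box tool --- so there is no internal argument to compare yours against. On its own terms, your reconstruction is not the Alwen--Peikert construction but the later Micciancio--Peikert gadget-trapdoor construction (the same machinery behind Lemmas~\ref{BasisExtABB} and~\ref{GTrap} here): $\bold A=[\bar{\bold A}\mid\bold G-\bar{\bold A}\bold R]$, uniformity from the leftover hash lemma (Lemma~\ref{LeftOverHash}), and the identity $\bold A\cdot[\bold R^\top\mid\bold I_w]^\top=\bold G\pmod q$. That route is a legitimate way to obtain the stated bounds, and those first two steps are correct.

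Two steps are genuinely incomplete, however. (1)~\emph{Basis completion.} The $w$ vectors $[\bold R^\top\mid\bold I_w]^\top\bold T_{\bold G}$ span only a rank-$w$ sublattice of the $m$-dimensional lattice $\bold\Lambda_q^\perp(\bold A)$, and the columns of $\bar{\bold A}$ do not supply ``free'' kernel vectors: to get the remaining $\bar m$ basis vectors you need a short integer matrix $\bold W$ with $\bold G\bold W=\bar{\bold A}\pmod q$ (e.g.\ by bit decomposition), so that the columns of $[\bold I_{\bar m}\mid -\bold W^\top]^\top$ lie in $\bold\Lambda_q^\perp([\bar{\bold A}\mid\bold G])$ and can be pushed forward through $\bold M=\bigl(\begin{smallmatrix}\bold I&\bold R\\ \bold 0&\bold I\end{smallmatrix}\bigr)$; one must then verify that the resulting $m$ vectors generate all of $\bold\Lambda_q^\perp(\bold A)$ rather than a proper sublattice (both determinants equal $q^n$). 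Without $\bold W$ you have neither a basis nor any control over the norms of those $\bar m$ columns. (2)~\emph{The Gram--Schmidt bound.} Your ``orthogonalize in the order of the identity block'' paragraph is an intuition, not an argument, and it misattributes the source of the $O(\sqrt{n\log q})$ factor. The clean step is multiplicativity: writing $\bold S_{\bold A}=\bold M\bold B$ with $\bold B=\bigl(\begin{smallmatrix}\bold I&\bold 0\\ -\bold W&\bold T_{\bold G}\end{smallmatrix}\bigr)$, the $i$-th Gram--Schmidt vector of $\bold M\bold B$ has length at most $\lVert\bold M\widetilde{\bold b_i}\rVert\le s_1(\bold M)\lVert\widetilde{\bold b_i}\rVert$, hence $\lVert\widetilde{\bold S_{\bold A}}\rVert\le s_1(\bold M)\cdot\lVert\widetilde{\bold B}\rVert\le(1+s_1(\bold R))\cdot O(1)=O(\sqrt{n\log q})$, where $s_1(\bold R)=O(\sqrt{n\log q})$ follows from a sub-Gaussian singular-value bound (Lemma~\ref{Lemma1} concerns discrete Gaussians and does not apply to a $\pm1$ matrix), and $\lVert\widetilde{\bold B}\rVert=O(1)$ once the gadget-block columns are orthogonalized first. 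Supplying $\bold W$ and this multiplicativity fact closes both gaps.
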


\begin{lemma}[{\rm \textsf{BasisExt}} Algorithm {\rm \cite{CHKP10}}]
For $i=1,2,3$, let $\bold A_i$ be a matrix in $\mathbb{Z}_q^{n\times m_i}$ whose columns generate $\mathbb{Z}_q^n$ and let $\bold A'=[\bold A_1|\bold A_2|\bold A_3]$. Let $\bold {S}_{\bold A_2}$ be a basis of $\bold{\Lambda}^\perp(\bold {S}_{\bold A_2})$. There is a deterministic algorithm {\rm $\textsf{BasisExt}(\bold A',\bold {S}_{\bold A_2})$} that outputs a basis $\bold {S}_{\bold A'}$ for ${\rm \bold{\Lambda}}^\perp(\bold A')$ such that $\lVert \widetilde{\bold {S_{A'}}}\rVert=\lVert \widetilde{\bold {S}_{\bold A_2}}\rVert$.
\label{BasisExt}
\end{lemma}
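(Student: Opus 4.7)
The plan is to exhibit an explicit basis for $\bold{\Lambda}^\perp(\bold A')$ by extending $\bold S_{\bold A_2}$ with two identity ``padding'' blocks, and then to analyse its Gram-Schmidt norms via a carefully chosen column ordering. I would first exploit the hypothesis that the columns of $\bold A_2$ generate $\mathbb{Z}_q^n$ (so that $\bold A_2\colon\mathbb{Z}^{m_2}\to \mathbb{Z}_q^n$ is surjective) to solve, column-by-column by Gaussian elimination over $\mathbb{Z}_q$, the two matrix equations $\bold A_2 \bold W_1 = -\bold A_1 \pmod{q}$ and $\bold A_2 \bold W_3 = -\bold A_3 \pmod{q}$ for some $\bold W_1 \in \mathbb{Z}_q^{m_2 \times m_1}$ and $\bold W_3 \in \mathbb{Z}_q^{m_2 \times m_3}$. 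Setting
\begin{equation*}
\bold S_{\bold A'} = \begin{pmatrix} \bold I_{m_1} & \bold 0 & \bold 0 \\ \bold W_1 & \bold S_{\bold A_2} & \bold W_3 \\ \bold 0 & \bold 0 & \bold I_{m_3} \end{pmatrix},
\end{equation*}
direct substitution into $\bold A'=[\bold A_1\mid\bold A_2\mid\bold A_3]$ shows each column lies in $\bold{\Lambda}^\perp(\bold A')$, and linear independence is immediate from the identity blocks.

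To promote this to a full $\mathbb{Z}$-basis I would compare covolumes. Since each $\bold A_i$, and hence $\bold A'$, is surjective onto $\mathbb{Z}_q^n$, both $\bold{\Lambda}^\perp(\bold A')$ and $\bold{\Lambda}^\perp(\bold A_2)$ have covolume $q^n$; expanding the determinant of $\bold S_{\bold A'}$ through the identity blocks yields $|\det\bold S_{\bold A'}|=|\det\bold S_{\bold A_2}|$. Hence the sublattice generated by the columns of $\bold S_{\bold A'}$ has the same covolume as the ambient lattice and therefore coincides with it.

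The main step, and the one I expect to be the real obstacle, is the Gram-Schmidt norm equality. I would perform the orthogonalisation in a non-obvious order: first the $m_2$ middle-block columns $(\bold 0,\bold s,\bold 0)$ as $\bold s$ ranges over the columns of $\bold S_{\bold A_2}$, then the $m_1+m_3$ padded-identity columns. Because the first batch lives entirely inside $\{\bold 0\}\times\mathbb{R}^{m_2}\times\{\bold 0\}$, its Gram-Schmidt process is precisely that of $\bold S_{\bold A_2}$ lifted into the middle slot, and since $\bold S_{\bold A_2}$ has full $\mathbb{R}$-rank in $\mathbb{R}^{m_2}$, the span of the orthogonalised vectors is exactly that middle block. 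Consequently, when Gram-Schmidt reaches a column $(\bold e_i,\bold W_1 \bold e_i,\bold 0)$, subtracting the projections onto the span so far erases the middle component and leaves the unit vector $(\bold e_i,\bold 0,\bold 0)$, which is automatically orthogonal to the earlier $\bold e_{i'}$ unit vectors; the $\bold W_3$-columns are handled symmetrically. Hence every Gram-Schmidt vector outside the initial batch has norm exactly $1$, and because any basis of a full-rank integer sublattice satisfies $\lVert\widetilde{\bold S_{\bold A_2}}\rVert\geq 1$, the overall maximum Gram-Schmidt length equals $\lVert\widetilde{\bold S_{\bold A_2}}\rVert$, as claimed. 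The delicate point is really this choice of ordering: the stated equality should be read as holding for the ordering that the algorithm itself outputs.
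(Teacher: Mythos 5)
Your proof is correct and is essentially the canonical \textsf{ExtBasis} construction from the cited reference \cite{CHKP10}, which the paper itself invokes without reproducing a proof: solve for the $\bold W$-blocks via surjectivity of $\bold A_2$, pad with identities, match covolumes to get a genuine basis, and orthogonalize with the $\bold S_{\bold A_2}$-columns first so the remaining Gram--Schmidt vectors collapse to unit vectors. Your closing caveat about the column ordering is exactly the convention under which \cite{CHKP10} state the equality $\lVert\widetilde{\bold S_{\bold A'}}\rVert=\lVert\widetilde{\bold S_{\bold A_2}}\rVert$ (their ``moreover'' clause covering permuted arrangements of the blocks), so no gap remains.
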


\begin{lemma}[{\rm \textsf{BasisRand}} Algorithm {\rm \cite{CHKP10}}]
Let $\bold {S_{A'}}\in \mathbb{Z}^{m'\times m'}$ be an extended basis of ${\rm \bold{\Lambda}}^\perp(\bold A')$ output by {\rm \textsf{BasisExt}}. There is a probabilistic algorithm {\rm $\textsf{BasisRand}(\bold {S_{A'}},\sigma)$} which takes as input a basis $\bold {S_{A'}}$ and a parameter $\sigma\geq \lVert \widetilde{\bold {S_{A'}}}\rVert\cdot \omega(\sqrt{{\rm log}~m})$, outputs a basis $\bold {S_{A''}}\in \mathbb{Z}^{m'\times m'}$ of ${\rm \bold{\Lambda}}^\perp(\bold A')$ which is statistically independent with the original basis $\bold {S_{A'}}$, and has $\lVert \widetilde{\bold {S_{A''}}}\rVert\leq \sigma\cdot \sqrt{m'}$ holds. 
\label{BasisRand}
\end{lemma}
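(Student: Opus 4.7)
The plan is to realize $\textsf{BasisRand}$ by first drawing many independent discrete Gaussian samples from $\mathcal{D}_{\bold{\Lambda}_q^\perp(\bold{A'}),\sigma}$ using the input basis $\bold{S_{A'}}$, and then extracting from them a new basis of the same lattice. Concretely, I would invoke a GPV-style Gaussian sampler (or Klein's algorithm) on $(\bold{S_{A'}},\sigma)$ to obtain vectors $\bold{v}_1,\dots,\bold{v}_N \in \bold{\Lambda}_q^\perp(\bold{A'})$ for a suitable $N=\mathrm{poly}(m')$ such that, with overwhelming probability, a subset of $m'$ of them is $\mathbb{Z}$-linearly independent. The standing hypothesis $\sigma\geq \lVert \widetilde{\bold{S_{A'}}}\rVert\cdot\omega(\sqrt{\log m})$ is exactly the smoothing-parameter condition that lets the sampler output vectors within negligible statistical distance of the ideal discrete Gaussian on the lattice, regardless of which basis is fed in; this is the property that will ultimately give statistical independence from $\bold{S_{A'}}$.

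Next, I would convert the full-rank set of $m'$ Gaussian vectors into an actual basis $\bold{S_{A''}}$ via the standard $\textsf{ToBasis}$ procedure (Micciancio--Goldwasser), which takes any linearly independent set of lattice vectors and returns a basis whose Gram--Schmidt vectors are no longer than those of the original set. Each sampled $\bold{v}_i$ satisfies $\lVert \bold{v}_i\rVert \leq \sigma\sqrt{m'}$ with overwhelming probability by Lemma \ref{Lemma1}, so the resulting basis inherits the bound $\lVert \widetilde{\bold{S_{A''}}}\rVert \leq \sigma\cdot\sqrt{m'}$ claimed in the lemma.

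For statistical independence, the key observation is that the output distribution of the GPV sampler with parameter $\sigma$ above the smoothing parameter depends only on the lattice and not on the particular basis used to drive it. Hence the joint distribution of the sampled vectors, and consequently the derived basis $\bold{S_{A''}}$ (since $\textsf{ToBasis}$ is a deterministic function of those vectors), is statistically close to a distribution that is a function of $\bold{\Lambda}_q^\perp(\bold{A'})$ alone. I would formalize this by a short hybrid argument: replace the GPV samples by truly ideal discrete Gaussian samples at negligible statistical cost, and in that ideal world $\bold{S_{A''}}$ is clearly independent of $\bold{S_{A'}}$.

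The main obstacle is precisely the step that argues neither the Gaussian sampling nor the subsequent $\textsf{ToBasis}$ conversion smuggles information about $\bold{S_{A'}}$ into $\bold{S_{A''}}$; this is handled by invoking the smoothing-parameter bound on $\sigma$ together with the tail estimate of Lemma \ref{Lemma1}, both of which the hypothesis $\sigma\geq \lVert \widetilde{\bold{S_{A'}}}\rVert\cdot\omega(\sqrt{\log m})$ is tailored to enable. A secondary subtlety is ensuring that the probability of failing to find $m'$ linearly independent samples among $N$ draws is negligible, which follows from a standard rank-argument on discrete Gaussians whose support generates the full lattice.
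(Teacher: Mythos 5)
This lemma is imported from \cite{CHKP10} as a preliminary; the paper itself gives no proof of it, so there is no in-paper argument to compare against. Your reconstruction is the standard \textsf{RandBasis} argument from the cited work: sample from $\mathcal{D}_{\bold{\Lambda}^\perp(\bold A'),\sigma}$ with the given basis until $m'$ linearly independent vectors are collected, convert them with \textsf{ToBasis} (which does not increase Gram--Schmidt norms), bound $\lVert\widetilde{\bold S_{\bold A''}}\rVert$ by the tail bound of Lemma \ref{Lemma1}, and obtain basis-independence from the fact that above the smoothing parameter the sampler's output distribution depends only on the lattice --- all of which is correct and is exactly how the original proof goes, including your acknowledged need for the standard argument that polynomially many Gaussian samples span the lattice with overwhelming probability.
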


The following lemma is a property of the \textsf{BasisRand} algorithm, which will be used in our signer-anonymity proof.

\begin{lemma}[Trapdoor Indistinguishability of {\rm \textsf{BasisRand}} {\rm \cite{CHKP10}}]
For any two basis $\bold S_0,\bold S_1$ of the same lattice and any $\sigma\geq \mathrm{max}\big\{\lVert \widetilde{\bold S_0}\rVert,\lVert \widetilde{\bold S_1}\rVert\big\}\cdot \omega(\sqrt{{\rm log}~m})$, the outputs of {\rm $\textsf{BasisRand}(\bold S_0,\sigma)$} and {\rm $\textsf{BasisRand}(\bold S_1,\sigma)$} are within $\mathrm{negl}(n)$ statistical distance.
\label{PropertyOfBasisRand}
\end{lemma}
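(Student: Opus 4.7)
The plan is to reduce the claim to the standard property of the GPV/Klein trapdoor Gaussian sampler: when the width $\sigma$ exceeds the Gram--Schmidt norm of the input basis by an $\omega(\sqrt{\log m})$ factor (which is at least the smoothing parameter of the lattice), its output is within $\mathrm{negl}(n)$ statistical distance of the true discrete Gaussian $\mathcal{D}_{\bold{\Lambda},\sigma}$, a distribution that depends only on $\bold{\Lambda}$ and $\sigma$, not on the basis fed to the sampler. Since $\textsf{BasisRand}(\bold S,\sigma)$ by construction invokes such a sampler using $\bold S$ as trapdoor to draw $O(m)$ lattice vectors and then applies a deterministic post-processing step (e.g.\ the ``\textsf{ToBasis}'' procedure of Ajtai/Regev) that converts a sufficiently long list of Gaussian samples into a basis, the only place where the two executions $\textsf{BasisRand}(\bold S_0,\sigma)$ and $\textsf{BasisRand}(\bold S_1,\sigma)$ can differ in distribution is at the sampling step.

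First I would argue that, because $\sigma\geq \lVert \widetilde{\bold S_b}\rVert\cdot \omega(\sqrt{\log m})$ for each $b\in\{0,1\}$, the joint distribution of the $O(m)$ samples produced inside $\textsf{BasisRand}(\bold S_b,\sigma)$ is within $\mathrm{negl}(n)$ of the product distribution $\mathcal{D}_{\bold{\Lambda},\sigma}^{O(m)}$. Call $\mathcal{D}_*$ the distribution of the output obtained by running the same deterministic post-processing step on $O(m)$ fresh independent draws from $\mathcal{D}_{\bold{\Lambda},\sigma}$; note that $\mathcal{D}_*$ is defined purely in terms of $\bold{\Lambda}$ and $\sigma$. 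Since the post-processing is a deterministic (hence statistical-distance preserving) function of the sample list, the output of $\textsf{BasisRand}(\bold S_b,\sigma)$ is within $\mathrm{negl}(n)$ of $\mathcal{D}_*$ for each $b$.

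The claim then follows by the triangle inequality for statistical distance,
\[
\Delta\bigl(\textsf{BasisRand}(\bold S_0,\sigma),\textsf{BasisRand}(\bold S_1,\sigma)\bigr) \leq \Delta(\textsf{BasisRand}(\bold S_0,\sigma),\mathcal{D}_*) + \Delta(\mathcal{D}_*,\textsf{BasisRand}(\bold S_1,\sigma)) = \mathrm{negl}(n).
\]
The essential observation is that $\mathcal{D}_*$ is a ``basis-free'' reference distribution that both executions approximate, so neither $\bold S_0$ nor $\bold S_1$ leaves a detectable trace in the output.

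The step I expect to be the main obstacle is the careful handling of the basis-assembly post-processing, since a naive implementation might reject samples until it obtains $m$ linearly independent ones, creating a conditioning that could in principle amplify the tiny sampling-distribution gap. I would handle this by observing that, at widths above the smoothing parameter (which is implied by the hypothesis on $\sigma$ via Lemma~\ref{Lemma1} and standard smoothing bounds), the event of obtaining $m$ linearly independent samples within $O(m)$ draws occurs with overwhelming probability under \emph{both} the real and the ideal sample distributions; conditioning on an overwhelming event changes statistical distance by at most a further negligible additive term, leaving the final bound negligible as required.
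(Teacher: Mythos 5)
Your argument is correct and is precisely the standard proof of this fact from the cited source \cite{CHKP10}: the samples drawn inside \textsf{BasisRand} are within $\mathrm{negl}(n)$ of the basis-independent distribution $\mathcal{D}_{\bold{\Lambda},\sigma}$ once $\sigma$ exceeds the Gram--Schmidt norm by an $\omega(\sqrt{\log m})$ factor, the deterministic basis-assembly step preserves statistical distance, and the triangle inequality through the ideal reference distribution finishes the claim. The paper itself states this lemma without proof, importing it from \cite{CHKP10}, so there is nothing in the paper to diverge from; your treatment of the conditioning on obtaining $m$ linearly independent samples is the right way to close the one subtle point.
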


The following lattice basis extension algorithm is also needed for our security proof, which was presented by Agrawal, Boneh, and Boyen \cite{ABB10a}, so we abbreviate that as \textsf{BasisExtABB} algorithm.

\begin{lemma}[{\rm \textsf{BasisExtABB}} Algorithm {\rm \cite{ABB10a}}]
Let $q$ be a prime, $n,m$ be integers with $m>n$. There is a probabilistic algorithm {\rm $\textsf{BasisExtABB}(\bold A,\bold {B},\bold R, \bold {S_B})$} which takes as input two matrices $\bold A,\bold B\in \mathbb{Z}_q^{n\times m}$ whose columns generate $\mathbb{Z}_q^n$, a matrix $\bold R\in\mathbb{Z}^{m\times m}$, and a basis $\bold {S_B}\in\bold{\Lambda}_q^\perp(\bold B)$, outputs a basis $\bold {S_F}$ of $\bold{\Lambda}_q^\perp(\bold F)$ such that $\lVert \widetilde{\bold {S_F}}\rVert<(\lVert {\bold {R}}\rVert+1)\cdot\lVert \widetilde{\bold {S_B}}\rVert$ where $\bold F=[\bold A| \bold {AR}+\bold B]\in \mathbb{Z}_q^{n\times 2m}$.
\label{BasisExtABB}
\end{lemma}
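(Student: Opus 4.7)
The plan is to reduce the problem to the case $\bold R = \bold 0$ via a unimodular change of coordinates, then invoke the extension algorithm of Lemma \ref{BasisExt}, and finally pull the basis back. The crucial algebraic observation is that the block matrix $\bold U = \bigl(\begin{smallmatrix} \bold I_m & -\bold R \\ \bold 0 & \bold I_m \end{smallmatrix}\bigr) \in \mathbb{Z}^{2m \times 2m}$ is unimodular and satisfies $\bold F \cdot \bold U = [\bold A | \bold B]$, so left-multiplication by $\bold U$ is a lattice isomorphism $\bold{\Lambda}_q^\perp([\bold A | \bold B]) \to \bold{\Lambda}_q^\perp(\bold F)$.

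The algorithm then proceeds in two steps. First, invoke \textsf{BasisExt}$([\bold A | \bold B], \bold{S_B})$ (Lemma \ref{BasisExt} with $\bold A_1 = \bold A$, $\bold A_2 = \bold B$, $\bold A_3$ empty; the full-rank hypotheses hold by assumption) to obtain a basis $\bold S'$ of $\bold{\Lambda}_q^\perp([\bold A | \bold B])$ with $\lVert \widetilde{\bold S'} \rVert = \lVert \widetilde{\bold{S_B}} \rVert$. Second, output $\bold S_F = \bold U \cdot \bold S'$. By unimodularity of $\bold U$ combined with the isomorphism above, $\bold S_F$ is immediately seen to be a basis of $\bold{\Lambda}_q^\perp(\bold F)$.

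The only nontrivial step is the Gram--Schmidt bound. Here I exploit the specific block-triangular structure of the basis returned by \textsf{BasisExt}: its columns can be arranged so that the first $m$ are lifts $(\bold 0, \bold s_i)^\top$ of the columns $\bold s_i$ of $\bold{S_B}$, while the remaining $m$ have the form $(\bold e_j, \ast)^\top$ with $\bold e_j$ a standard basis vector. After left-multiplying by $\bold U$, the first $m$ columns become $(-\bold R \bold s_i, \bold s_i)^\top$ and the rest become $(\bold e_j - \bold R \ast, \ast)^\top$. Performing Gram--Schmidt in this order, the lower-block projections are preserved, while the top-block contributions are scaled by at most $\lVert \bold R \rVert + 1$ relative to the original Gram--Schmidt lengths, yielding $\lVert \widetilde{\bold S_F} \rVert < (\lVert \bold R \rVert + 1)\cdot \lVert \widetilde{\bold{S_B}} \rVert$.

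The main obstacle will be the Gram--Schmidt estimate itself, since GSO lengths are not preserved under general invertible transformations, and the naive operator-norm bound $\lVert \bold U \rVert \le \lVert \bold R \rVert + 1$ only controls raw column lengths rather than orthogonalized ones. Overcoming this requires exploiting the explicit block-triangular form of $\bold S'$ so that the shear $\bold U$ commutes, on the lower block, with the projections that define the Gram--Schmidt orthogonalization; this makes the norm growth of each $\widetilde{\bold s}_i^F$ attributable solely to the top block $-\bold R \cdot (\text{lower part})$, which is directly bounded by $\lVert \bold R \rVert$ times the unchanged lower GSO, giving the stated factor $\lVert \bold R \rVert + 1$.
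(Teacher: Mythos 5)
The paper does not prove this lemma at all --- it is imported verbatim from Agrawal--Boneh--Boyen \cite{ABB10a} --- so there is no in-paper proof to compare against; your proposal is in effect a reconstruction of the ABB10 argument, and it is the right one. The unimodular shear $\bold U$ with $\bold F\bold U=[\bold A\,|\,\bold B]$, the invocation of \textsf{BasisExt} on $[\bold A\,|\,\bold B]$, and the resulting basis with columns $(-\bold R\bold s_i,\bold s_i)^\top$ followed by $(\bold e_j-\bold R\bold w_j,\bold w_j)^\top$ are exactly how the cited algorithm works. One step of your Gram--Schmidt analysis is stated incorrectly, though the conclusion survives: it is \emph{not} true that ``the lower-block projections are preserved.'' The orthogonalized vectors of the sheared first block are $(-\bold R\bold u_i,\bold u_i)^\top$ where the $\bold u_i$ are the orthogonalization of the $\bold s_i$ with respect to the modified inner product $\langle\bold x,(\bold I+\bold R^\top\bold R)\bold y\rangle$, not the Euclidean GSO vectors $\widetilde{\bold s_i}$. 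The correct route to the bound is the variational characterization $\lVert\widetilde{\bold b}_i\rVert=\mathrm{dist}\bigl(\bold b_i,\mathrm{span}(\bold b_1,\dots,\bold b_{i-1})\bigr)$: since $\bold U$ carries $\mathrm{span}(\bold t_1,\dots,\bold t_{i-1})$ onto the span of the sheared columns and stretches any vector of the form $(\bold 0,\bold v)^\top$ by a factor at most $\sqrt{1+\lVert\bold R\rVert^2}$, each GSO norm in the first block grows by at most $\sqrt{1+\lVert\bold R\rVert^2}\le\lVert\bold R\rVert+1$, while for the second block the distance to the span of the first $m$ sheared columns is at most $\lVert\bold e_j\rVert=1$, exactly as you argue. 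With that one claim replaced by the distance argument (and reading $\lVert\bold R\rVert$ as the operator norm, as ABB10 do), your proof is complete.
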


\begin{lemma}[{\rm \textsf{SampleGaussian}} Algorithm {\rm \cite{GPV08}}]
Let $q>2$, $m>n$ be integers. There is a probabilistic algorithm {\rm $\textsf{SampleGaussian}(\bold A,\bold {S_A},\bold y,\sigma)$} which takes as input a matrix $\bold A\in \mathbb{Z}_q^{n\times m}$ whose columns generate $\mathbb{Z}_q^n$, and a basis $\bold {S_A}$ of $\bold{\Lambda}_q^\perp(\bold A)$, a vector $\bold y\in \mathbb{Z}_q^{n}$, and a Gaussian parameter $\sigma\geq\lVert \widetilde{\bold{S_A}}\rVert \cdot \omega(\sqrt{{\rm log}~m})$, outputs a vector $\bold x\in \bold{\Lambda}_q^{\bold y}(\bold A)$ sampled from a distribution which is statistically close to $\mathcal D_{\bold{\Lambda}_q^{\bold y}(\bold A),\sigma}$.
\label{SampleGaussian}
\end{lemma}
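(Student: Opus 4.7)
The plan is to reduce sampling from $\mathcal{D}_{\bold{\Lambda}_q^{\bold y}(\bold A),\sigma}$ to sampling from a shifted discrete Gaussian on the kernel lattice $\bold{\Lambda}_q^{\perp}(\bold A)$. First I would compute any integer preimage $\bold t\in\mathbb{Z}^m$ with $\bold A\bold t = \bold y\pmod q$ via Gaussian elimination over $\mathbb{Z}_q$, which is feasible precisely because the columns of $\bold A$ generate $\mathbb{Z}_q^n$. Since $\bold{\Lambda}_q^{\bold y}(\bold A) = \bold t + \bold{\Lambda}_q^{\perp}(\bold A)$, drawing $\bold x$ from $\mathcal{D}_{\bold{\Lambda}_q^{\bold y}(\bold A),\sigma}$ is equivalent to drawing $\bold v\leftarrow \mathcal{D}_{\bold{\Lambda}_q^{\perp}(\bold A),\sigma,-\bold t}$ and outputting $\bold x:=\bold t+\bold v$, which also guarantees $\bold x\in\bold{\Lambda}_q^{\bold y}(\bold A)$ deterministically.

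For the inner step I would invoke the randomized nearest-plane (Klein/GPV) sampler on the basis $\bold S_{\bold A}$ with target center $-\bold t$ and parameter $\sigma$. Processing the Gram–Schmidt vectors of $\bold S_{\bold A}$ from last to first and rounding each coordinate by a one-dimensional discrete Gaussian, this procedure produces a sample whose distribution is statistically close to $\mathcal{D}_{\bold{\Lambda}_q^{\perp}(\bold A),\sigma,-\bold t}$ whenever $\sigma$ exceeds the smoothing parameter $\eta_\epsilon(\bold{\Lambda}_q^{\perp}(\bold A))$ for some negligible $\epsilon$. Combining with the coset reduction above then yields the claimed closeness of the overall output to $\mathcal{D}_{\bold{\Lambda}_q^{\bold y}(\bold A),\sigma}$.

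The main obstacle is verifying that the hypothesis $\sigma \geq \lVert \widetilde{\bold S_{\bold A}}\rVert \cdot \omega(\sqrt{\log m})$ actually implies $\sigma \geq \eta_\epsilon(\bold{\Lambda}_q^{\perp}(\bold A))$ for negligible $\epsilon$. I would invoke the Micciancio–Regev upper bound $\eta_\epsilon(\bold{\Lambda}) \leq \lVert \widetilde{\bold S}\rVert \cdot \sqrt{\log(2m(1+1/\epsilon))/\pi}$ that holds for any basis $\bold S$ of $\bold{\Lambda}$, then take $\epsilon$ to be any negligible function of $n$; the super-logarithmic slack $\omega(\sqrt{\log m})$ in the hypothesis absorbs the $\sqrt{\log(2m(1+1/\epsilon))/\pi}$ factor up to negligible error. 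Once this smoothing-parameter bound is in place, the remaining ingredients—Gaussian elimination for $\bold t$, the nearest-plane sampler, and the translation by $\bold t$—are all polynomial-time and introduce no additional statistical error, so both the efficiency and statistical-closeness conclusions of the lemma follow.
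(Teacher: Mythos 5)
This lemma is imported by citation from Gentry--Peikert--Vaikuntanathan and the paper offers no proof of its own, and your reconstruction is exactly the standard GPV argument: shift to the coset $\bold t+\bold{\Lambda}_q^{\perp}(\bold A)$ via a Gaussian-elimination preimage, run the randomized nearest-plane (Klein) sampler on $\bold S_{\bold A}$, and absorb the smoothing-parameter requirement into the $\omega(\sqrt{\log m})$ slack using the Micciancio--Regev bound. One small imprecision: correctness of the nearest-plane sampler is governed by the per-coordinate conditions $\sigma/\lVert\widetilde{\bold b_i}\rVert\geq\eta_\epsilon(\mathbb{Z})$ (hence by $\lVert\widetilde{\bold S_{\bold A}}\rVert$), not merely by $\sigma$ exceeding $\eta_\epsilon(\bold{\Lambda}_q^{\perp}(\bold A))$ as you phrase it, but since your verification step works directly with the Gram--Schmidt norm the argument still lands correctly.
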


Given output values of the algorithms \textsf{SampleGaussian}, the following algorithm \textsf{ExplainGaussian} is used to sample the randomness under which the \textsf{SampleGaussian} algorithm produces the desired output. 

\begin{lemma}[{\rm \textsf{ExplainGaussian}} Algorithm \cite{PS19}]
There is a probabilistic algorithm {\rm \textsf{ExplainGaussian}}$(\bold A,\bold {S_A}, \bold x, \sigma, \bold y)$ that on input a pair of matrices $(\bold A,\bold {S_A})$ from {\rm\textsf{TrapGen}}, preimage vector $\bold x$, a parameter $\sigma$ and a image vector $\bold y\in \mathbb{Z}^{m}$, samples randomness $\gamma$ that yields output $\bold x$ under algorithm {\rm \textsf{SampleGaussian}}, i.e., samples from the distribution $\{\gamma |$ {\rm \textsf{SampleGaussian}}$(\bold A,\bold {S_A},\bold y, \sigma; \gamma)=\bold x\}$.
\label{ExplainGaussian}
\end{lemma}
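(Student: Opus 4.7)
The plan is to construct \textsf{ExplainGaussian} by inverting the internal sampling steps of the standard Klein/GPV implementation of \textsf{SampleGaussian} (Lemma \ref{SampleGaussian}). Recall that this sampler first fixes some particular preimage $\bold t \in \mathbb{Z}^m$ with $\bold A \bold t = \bold y \pmod q$, and then runs a randomized nearest-plane iteration along the Gram--Schmidt vectors $(\widetilde{\bold s}_1, \dots, \widetilde{\bold s}_m)$ of $\bold{S_A}$. In iteration $i$ it samples an integer $z_i \leftarrow \mathcal{D}_{\mathbb{Z}, \sigma_i, c_i}$ from a one-dimensional discrete Gaussian with width $\sigma_i = \sigma / \|\widetilde{\bold s}_i\|$ and a center $c_i$ determined by the preceding iterations, consuming an independent block of randomness $\gamma_i$. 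The output is $\bold x = \bold t + \sum_i z_i \bold s_i$, so the overall random tape decomposes as $\gamma = (\gamma_1, \dots, \gamma_m)$.

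Given the target output $\bold x$, \textsf{ExplainGaussian} would first use $\bold{S_A}$ to recover the uniquely determined integer coefficients $(z_1, \dots, z_m)$ satisfying $\bold x - \bold t = \sum_i z_i \bold s_i$. It then invokes, for each $i$ in turn, a one-dimensional sub-routine \textsf{ExplainDisc}$(c_i, \sigma_i, z_i)$ that draws a fresh block $\gamma_i$ from the conditional distribution of randomness under which the underlying integer Gaussian sampler returns exactly $z_i$. For the standard implementations of $\mathcal{D}_{\mathbb{Z}, \sigma_i, c_i}$ (e.g.\ rejection sampling or CDT-with-resampling), such a conditional re-sampling is direct: one picks an accepting tape uniformly among those that produce $z_i$. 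The final output is $\gamma = (\gamma_1, \dots, \gamma_m)$.

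Correctness would proceed by induction on $i$: assuming the first $i-1$ coefficients reconstructed by the explainer coincide with those produced by \textsf{SampleGaussian} on tape $\gamma$, the $i$-th center $c_i$ computed inside \textsf{SampleGaussian} matches the one computed by the explainer, so the marginal of $\gamma_i$ matches the conditional law of the $i$-th randomness block. Because the blocks are independent across iterations, this gives exact agreement of the joint distribution of $\gamma$ with the target law $\{\gamma \mid \textsf{SampleGaussian}(\bold A, \bold{S_A}, \bold y, \sigma; \gamma) = \bold x\}$. The main obstacle is the one-dimensional explainer: to obtain an \emph{exact} rather than merely statistically close match one needs an integer Gaussian sampler whose accepting tapes for a given output can be enumerated and re-sampled in polynomial time. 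Standard tail bounds ensure $|z_i| \leq \sigma_i \cdot \omega(\sqrt{\log n})$ with overwhelming probability, so the set of tapes that accept and output $z_i$ has non-negligible measure and efficient conditional re-sampling is feasible; handling the negligible tail event (by aborting or resampling) is what allows the final statistical closeness claim.
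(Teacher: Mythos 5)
The paper itself gives no proof of this lemma: it is imported verbatim from Park and Sealfon \cite{PS19} as a black-box building block, so there is no in-paper argument to compare against. That said, your sketch is a faithful reconstruction of how the result is actually established in \cite{PS19}: one fixes the Klein/GPV implementation of \textsf{SampleGaussian}, uses the basis $\bold{S_A}$ to recover the unique integer coefficients $z_i$ of $\bold x-\bold t$, recomputes the (deterministic) centers $c_i$ in order, and reduces everything to a one-dimensional ``explain'' subroutine for $\mathcal{D}_{\mathbb{Z},\sigma_i,c_i}$; your induction over the iterations and the appeal to independence of the per-iteration randomness blocks are exactly the right correctness argument. You have also correctly isolated the one genuine technical crux, namely that the lemma is implementation-dependent: it holds only for instantiations of the one-dimensional sampler (e.g.\ inverse-CDF, where the accepting tapes for a fixed output form an interval) that admit efficient conditional resampling, and only for outputs $\bold x$ whose coefficients lie in the high-probability support so that the conditional sets are non-negligible --- which is the case in the anonymity reduction, where $\bold x$ is always an honest output of \textsf{SampleGaussian} under a different trapdoor. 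The only refinement worth adding is that, because of the tail event you mention and because \textsf{SampleGaussian} itself is only guaranteed to be \emph{statistically close} to $\mathcal{D}_{\bold{\Lambda}_q^{\bold y}(\bold A),\sigma}$, the conclusion one should state (and the one actually used via Lemma \ref{PropertyOfExplainGaussian}) is statistical closeness of the explained randomness to the true conditional law, not exact equality.
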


The following lemma is a property of the \textsf{ExplainGaussian} algorithm, which will be used in our signer-anonymity proof.

\begin{lemma}[Randomness Indistinguishability of {\rm \textsf{ExplainGaussian}} {\rm \cite{PS19}}]
Let $\mathcal{R}$ be the randomness space, let $(\bold A_0,\bold {S}_{\bold{A}_0})$ and $(\bold A_1,\bold {S}_{\bold{A}_1})$ be two pairs of matrices from {\rm\textsf{TrapGen}}, let $\bold F = [\bold A_0 | \bold A_1]$, let $\bold {S}_{\bold{F}_0}$ and $\bold {S}_{\bold{F}_1}$ be the extended basis from $\bold {S}_{\bold{A}_0}$ and $\bold {S}_{\bold{A}_1}$ respectively. The distribution of randomness $\gamma^{(0)}\xleftarrow{\$} \mathcal{R}$ for {\rm $\bold x \leftarrow \textsf{SampleGaussian}(\bold F, \bold {S}_{\bold{F}_0}, \bold y, \sigma; \gamma^{(0)})$} and {\rm $\gamma^{(1)}\leftarrow\textsf{ExplainGaussian}(\bold F, \bold {S}_{\bold{F}_1}, \bold x, \sigma, \bold y)$} are statistically indistinguishable.
\label{PropertyOfExplainGaussian}
\end{lemma}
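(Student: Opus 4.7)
The statement reduces to showing that the marginal distribution of $\gamma^{(1)}$ is statistically close to uniform on $\mathcal{R}$, since $\gamma^{(0)}\xleftarrow{\$}\mathcal{R}$ is uniform by construction. The plan is a two-step hybrid argument. In the first step, inside the \textsf{SampleGaussian} invocation that produces $\bold x$, I would swap the trapdoor $\bold{S}_{\bold{F}_0}$ for $\bold{S}_{\bold{F}_1}$; in the second step, I would exploit the fact that $\textsf{ExplainGaussian}$ with $\bold{S}_{\bold{F}_1}$ is precisely the left-inverse, in distribution, of $\textsf{SampleGaussian}$ with the same trapdoor $\bold{S}_{\bold{F}_1}$, so that the composition collapses to uniform randomness.

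For the trapdoor swap, I would appeal to Lemma \ref{SampleGaussian}: instantiated with either $\bold{S}_{\bold{F}_0}$ or $\bold{S}_{\bold{F}_1}$, the output of $\textsf{SampleGaussian}(\bold F,\cdot,\bold y,\sigma;\gamma)$ on uniform $\gamma$ lies within $\mathrm{negl}(n)$ statistical distance of $\mathcal{D}_{\bold{\Lambda}_q^{\bold y}(\bold F),\sigma}$. Hence the distribution of the intermediate vector $\bold x$ changes by at most $\mathrm{negl}(n)$ when $\bold{S}_{\bold{F}_0}$ is replaced by $\bold{S}_{\bold{F}_1}$, and since $\textsf{ExplainGaussian}$ is a randomised function of $\bold x$ alone (the trapdoor input $\bold{S}_{\bold{F}_1}$ is fixed), the resulting $\gamma^{(1)}$ distributions in the real experiment and in the hybrid differ by the same negligible quantity.

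For the collapse, I would analyse the hybrid $H$ in which $\bold x \leftarrow \textsf{SampleGaussian}(\bold F,\bold{S}_{\bold{F}_1},\bold y,\sigma;\gamma)$ with uniform $\gamma$, followed by $\gamma^{(1)}\leftarrow \textsf{ExplainGaussian}(\bold F,\bold{S}_{\bold{F}_1},\bold x,\sigma,\bold y)$. Conditioned on any fixed $\bold x$, the law of the original $\gamma$ is uniform on the fiber $\{\tilde\gamma\in\mathcal{R} : \textsf{SampleGaussian}(\bold F,\bold{S}_{\bold{F}_1},\bold y,\sigma;\tilde\gamma)=\bold x\}$, which by Lemma \ref{ExplainGaussian} is precisely the distribution that $\textsf{ExplainGaussian}$ samples. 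Thus $(\bold x,\gamma^{(1)})$ and $(\bold x,\gamma)$ are identically distributed in $H$, so the marginal of $\gamma^{(1)}$ in $H$ is uniform on $\mathcal{R}$, matching $\gamma^{(0)}$ exactly. A triangle inequality with the first step closes the argument.

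The main obstacle is this second step: it hinges on $\textsf{ExplainGaussian}$ sampling exactly (not merely approximately) uniformly from the preimage fiber, and on using the same trapdoor $\bold{S}_{\bold{F}_1}$ inside both the hybrid's sampler and the explainer — if these trapdoors were different, the fibers would not coincide and the distributional inversion would fail. A minor side condition is ensuring that $\sigma$ exceeds the Gram-Schmidt norm threshold $\lVert\widetilde{\bold{S}_{\bold{F}_b}}\rVert\cdot\omega(\sqrt{\log m})$ of Lemma \ref{SampleGaussian} for both $b\in\{0,1\}$; this follows from the standing parameter choices of the construction and the bounds guaranteed by Lemmas \ref{BasisExt} and \ref{BasisRand}. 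Once these points are in place, the remainder is bookkeeping of two negligible statistical errors.
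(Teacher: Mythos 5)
Your argument is sound, and it is worth saying up front that the paper offers no proof of Lemma \ref{PropertyOfExplainGaussian} at all: it is imported from \cite{PS19} as a black-box property, so there is no in-paper derivation to compare against. Your two-hybrid decomposition is the natural way to establish it: step one (swapping $\bold S_{\bold F_0}$ for $\bold S_{\bold F_1}$ inside the sampler, bounded via Lemma \ref{SampleGaussian} and data processing) and step two (the exact distributional inversion of \textsf{SampleGaussian} by \textsf{ExplainGaussian} under the \emph{same} trapdoor, via the uniform-on-the-fiber conditional and Lemma \ref{ExplainGaussian}) are both correct, and you correctly isolate the two points the argument genuinely hinges on. One interpretive remark: you reduce the lemma to closeness of the \emph{marginal} of $\gamma^{(1)}$ to uniform, which is what the statement literally asserts; note that the joint distributions $(\bold x,\gamma^{(0)})$ and $(\bold x,\gamma^{(1)})$, with $\bold x$ fixed to the $\bold S_{\bold F_0}$-sample in both, are \emph{not} statistically close, since a distinguisher holding both trapdoors can test under which sampler the revealed randomness reproduces $\bold x$. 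What the anonymity reduction of Theorem \ref{AnonymityProof} actually needs is the joint statement your intermediate hybrid already delivers, namely that $(\bold x,\gamma^{(1)})$ with $\bold x$ sampled under $\bold S_{\bold F_0}$ and explained under $\bold S_{\bold F_1}$ is statistically close to $(\bold x,\gamma)$ with $\bold x$ sampled under $\bold S_{\bold F_1}$ and $\gamma$ its true uniform randomness; it would strengthen the write-up to state that joint form as the conclusion rather than projecting down to the marginal. The parameter side condition you flag is satisfied here because \textsf{BasisExt} preserves the Gram--Schmidt norm (Lemma \ref{BasisExt}) and both trapdoors come from \textsf{TrapGen} (Lemma \ref{TrapGen}).
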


\noindent \textbf{Gadget Matrix.} The ``gadget matrix'' $\bold G$ defined in the work \cite{MP12}. We recall the following one fact.
\begin{lemma}[{\rm \cite{MP12}}]
Let $q$ be a prime, and $n$, $m$ be integers with $m=n~{\rm log}~q$. There is a fixed full-rank matrix such that the lattice $\bold{\Lambda}_q^\perp(\bold G)$ has a publicly known basis $\bold{S_G}\in \mathbb{Z}^{m\times m}$ with $\lVert \widetilde{\bold{S_G}}\rVert\leq \sqrt{5}$.
\label{GTrap}
\end{lemma}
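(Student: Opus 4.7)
The plan is to follow the Micciancio--Peikert construction. First I would take the primal gadget vector $\bold g^\top=(1,2,4,\ldots,2^{k-1})\in \mathbb{Z}_q^{1\times k}$ with $k=\lceil \log q\rceil$, and set $\bold G:=\bold I_n\otimes \bold g^\top$, which is the required $n\times nk$ full-rank matrix with $m=nk=n\log q$. Because of this tensor structure, the lattice $\bold{\Lambda}_q^\perp(\bold G)$ is the orthogonal direct sum of $n$ independent copies of the one-dimensional primal lattice $\bold{\Lambda}_q^\perp(\bold g^\top)\subset \mathbb{Z}^k$. It therefore suffices to build a short basis $\bold S_{\bold g}$ of this primal lattice and tile it block-diagonally to obtain $\bold S_{\bold G}$; since columns belonging to different blocks lie in mutually orthogonal coordinate subspaces, the Gram--Schmidt norm is preserved, so $\lVert\widetilde{\bold S_{\bold G}}\rVert=\lVert\widetilde{\bold S_{\bold g}}\rVert$.

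Next I would exhibit the explicit ``staircase plus syndrome'' basis
$$
\bold S_{\bold g}=\begin{pmatrix} 2 & & & & q_0\\ -1 & 2 & & & q_1\\ & -1 & \ddots & & \vdots\\ & & \ddots & 2 & q_{k-2}\\ & & & -1 & q_{k-1}\end{pmatrix},
$$
where $(q_0,\ldots,q_{k-1})$ is the binary expansion of $q$. The staircase columns $\bold s_i=2\bold e_i-\bold e_{i+1}$ each satisfy $\bold g^\top\bold s_i=2\cdot 2^{i-1}-2^i=0$, and the syndrome column $\bold s_k=(q_0,\ldots,q_{k-1})^\top$ satisfies $\bold g^\top \bold s_k=q\equiv 0\pmod q$, so every column lies in $\bold{\Lambda}_q^\perp(\bold g^\top)$. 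A direct evaluation $|\det(\bold S_{\bold g})|=q$ matches the lattice determinant and confirms $\bold S_{\bold g}$ is actually a basis.

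The main technical step is controlling the Gram--Schmidt norms under left-to-right orthogonalization. For the staircase columns one shows by induction that $\widetilde{\bold s}_i$ has support in coordinates $1,\ldots,i+1$ with a $-1$ in the $(i+1)$-th slot, which makes $\langle \bold s_{i+1},\widetilde{\bold s}_j\rangle=0$ for $j<i$ and $\langle \bold s_{i+1},\widetilde{\bold s}_i\rangle=-2$, yielding the closed-form recurrence $a_{i+1}=5-4/a_i$ on the squared norms $a_i:=\lVert\widetilde{\bold s}_i\rVert^2$ with $a_1=5$. This sequence is monotonically decreasing with fixed point $4$, so $a_i\in(4,5]$ for all $i$. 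For the syndrome column, note that the staircase columns span exactly $\{\bold v:2v_i=v_{i+1}\text{ for all }i\}^\perp=\text{span}(\bold g)^\perp$, so $\widetilde{\bold s}_k$ equals the orthogonal projection of $\bold s_k$ onto $\text{span}(\bold g)$, namely $\widetilde{\bold s}_k=\frac{\langle \bold s_k,\bold g\rangle}{\lVert\bold g\rVert^2}\bold g=\frac{3q}{4^k-1}\bold g$, whence $\lVert\widetilde{\bold s}_k\rVert^2=3q^2/(4^k-1)\leq 4$ using $q\leq 2^k$ and $k\geq 2$. Combining both bounds gives $\lVert\widetilde{\bold S_{\bold g}}\rVert\leq\sqrt{5}$, and the tensor reduction completes the argument.

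The anticipated main obstacle is precisely this coupled pair of Gram--Schmidt estimates: the staircase recurrence $a_{i+1}=5-4/a_i$ requires first establishing the inductive form of $\widetilde{\bold s}_i$ before one can read off the projection coefficients, and for the syndrome column one must recognize the identification of the staircase span with $\text{span}(\bold g)^\perp$ to keep $\lVert\widetilde{\bold s}_k\rVert$ bounded by an absolute constant rather than growing like $\lVert \bold s_k\rVert\leq\sqrt{k}$.
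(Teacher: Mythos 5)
The paper does not prove this lemma at all --- it simply recalls the fact from \cite{MP12} --- and your argument is a correct, self-contained reconstruction of exactly the construction given there: the gadget basis with staircase columns $2\mathbf{e}_i-\mathbf{e}_{i+1}$ plus the binary-expansion column, the recurrence $a_{i+1}=5-4/a_i$ bounding the staircase Gram--Schmidt norms by $\sqrt{5}$, and the projection of the last column onto $\mathrm{span}(\mathbf{g})$ giving $3q^2/(4^k-1)\le 4$. All steps check out (including the determinant comparison certifying a basis and the block-diagonal tensor reduction), so nothing further is needed.
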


\subsection{Key-Homomorphic Evaluation Algorithm}\label{KeyHomo}

In our construction, we borrow the idea from the standard signature work \cite{BL16}, that is employing the key-homomorphic evaluation algorithm $\textsf{Eval}(\cdot, \cdot)$ from \cite{GSW13,BV14,BGG+14} to evaluate circuits of a PRF. In particular, they used the Brakerski and Vaikuntanathan’s evaluation algorithm \cite{BV14}. The inputs of $\textsf{Eval}(\cdot, \cdot)$ are ${C}$ and a set of $\ell$ different matrices $\{\bold A^{(i)}\}_{i\in[\ell]}$, where ${C}: \{0,1\}^\ell\rightarrow \{0,1\}$ is a fan-in-2 Boolean $\textsf{NAND}$ circuit expression of some functions such as a PRF, and each $\bold A^{(i)}=\bold A\bold R^{(i)}+b^{(i)}\bold G\in\mathbb{Z}_q^{n\times m}$ corresponds to each input wire of ${C}$, and where $\bold A\xleftarrow{\$} \mathbb{Z}_q^{n\times m}$, $\bold R^{(i)}\xleftarrow{\$} \{1,-1\}^{m\times m}$, $b^{(i)}\in\{0,1\}$ and $\bold G\in\mathbb{Z}_q^{n\times m}$ is the gadget matrix. The algorithm deterministically output a matrix $\bold A_{C}=\bold A\bold R_{C}+{C}\big(b^{(1)},\dots,b^{(\ell)}\big)\bold G\in\mathbb{Z}_q^{n\times m}$. In the analyzation of our unforgeability proof, we will use the following lemma to show $\bold R_{C}$ is short enough.

\begin{lemma}
Let $C:\{0,1\}^\ell \rightarrow \{0,1\}$ be a {\rm $\textsf{NAND}$} boolean circuit which has depth $d=c~\mathrm{log}~\ell$ for some constant $c$. Let $\big\{\bold A^{(i)}=\bold A\bold R^{(i)}+b^{(i)}\bold G\in\mathbb{Z}_q^{n\times m}\big\}_{i\in[\ell]}$ be $\ell$ different matrices correspond to each input wire of ${C}$ where $\bold A\xleftarrow{\$} \mathbb{Z}_q^{n\times m}$, $\bold R^{(i)}\xleftarrow{\$} \{1,-1\}^{m\times m}$, $b^{(i)}\in\{0,1\}$ and $\bold G\in\mathbb{Z}_q^{n\times m}$ is the gadget matrix. There is an efficient deterministic evaluation algorithm {\rm $\textsf{Eval}\big(C, (\bold A^{(1)},\cdots, \bold A^{(\ell)})\big)$} runs in time $\mathrm{poly}(4^d,\ell,n,\mathrm{log}~q)$, the inputs are $C$ and $\{\bold A^{(i)}\}_{i\in[\ell]}$, the output is a matrix 
{\rm\begin{align*}
 \bold A_C&=\bold A\bold R_{C}+{C}\big(b^{(1)},\dots,b^{(\ell)}\big)\bold G\\
 		&=\textsf{Eval}\big(C, (\bold A^{(1)},\dots,\bold A^{(\ell)})\big)
\end{align*}}
 where ${C}\big(b^{(1)},\dots,b^{(\ell)}\big)$ is the output bit of $C$ on the arguments $\big(b^{(1)},\dots,b^{(\ell)}\big)$ and $\bold R_{C}\in \mathbb{Z}^{m\times m}$ is a low norm matrix has $\lVert\bold R_C\rVert \leq O(\ell^{2c}\cdot m^{3/2})$. 
\label{Lemma7}
\end{lemma}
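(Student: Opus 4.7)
The plan is to construct $\textsf{Eval}$ recursively by mirroring the gate structure of $C$, maintaining along each wire $v$ the invariant $\bold A_v = \bold A \bold R_v + b_v \bold G$. The input wires already satisfy this by hypothesis. Since $C$ consists only of $\textsf{NAND}$ gates, it suffices to exhibit a per-gate homomorphism and iterate it in topological order; the output wire then carries $\bold A_C$ of the required form. The running-time bound $\mathrm{poly}(4^d,\ell,n,\log q)$ follows because the number of gates is at most $O(4^d)$ and each gate costs one $\bold G^{-1}$ application plus an $n\times m$ matrix multiplication.

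For the single-gate step, given $\bold A_1 = \bold A \bold R_1 + b_1 \bold G$ and $\bold A_2 = \bold A \bold R_2 + b_2 \bold G$, I would define
\[
\bold A_{\mathsf{NAND}} \;:=\; \bold G \;-\; \bold A_1 \cdot \bold G^{-1}(\bold A_2),
\]
where $\bold G^{-1}(\cdot)$ is the deterministic bit-decomposition operator returning an $m\times m$ matrix with $\{0,1\}$-entries satisfying $\bold G\cdot \bold G^{-1}(\bold X) = \bold X$. A direct substitution yields
\[
\bold A_{\mathsf{NAND}} \;=\; \bold A\bigl(-\bold R_1\bold G^{-1}(\bold A_2)-b_1\bold R_2\bigr) \;+\; (1-b_1 b_2)\bold G,
\]
and since $\textsf{NAND}(b_1,b_2)=1-b_1 b_2$ for $b_1,b_2\in\{0,1\}$, the invariant is preserved with $\bold R_{\mathsf{NAND}} := -\bold R_1\bold G^{-1}(\bold A_2)-b_1\bold R_2$. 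The algorithm itself only records $\bold A_v$ wire by wire; the matrices $\bold R_v$ exist only implicitly and are used solely in the analysis.

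The main obstacle is the norm-growth bookkeeping required to reach $\lVert \bold R_C \rVert \leq O(\ell^{2c}\cdot m^{3/2})$ after depth $d = c\log \ell$. A naive triangle inequality applied to the recurrence for $\bold R_{\mathsf{NAND}}$ would blow up by a factor of $m$ per level, which is too large; to bring the per-level blow-up down to a constant factor of $4$ I would invoke the sharper evaluator of Brakerski--Vaikuntanathan cited in the excerpt, which exploits the $\{0,1\}$-structure and sparsity of $\bold G^{-1}$ established by Lemma \ref{GTrap}. With this, a straightforward induction on wire depth $d_v$ starting from $\lVert \bold R^{(i)} \rVert = \sqrt{m}$ (the exact $\ell_2$-norm of a $\pm 1$ column of height $m$) gives $\lVert \bold R_v \rVert \leq 4^{d_v}\cdot O(\sqrt{m})$; substituting $d = c\log \ell$ and using $4^{c\log \ell} = \ell^{2c}$ yields the stated bound.
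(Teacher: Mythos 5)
Your gate gadget and the invariant computation are correct: with $\bold A_{\textsf{NAND}} = \bold G - \bold A_1\bold G^{-1}(\bold A_2)$ one indeed gets $\bold R_{\textsf{NAND}} = -\bold R_1\bold G^{-1}(\bold A_2) - b_1\bold R_2$ and output bit $1-b_1b_2 = \textsf{NAND}(b_1,b_2)$. (Note that the paper itself gives no proof of this lemma; it imports it from Brakerski--Vaikuntanathan via Boyen--Li, so the only question is whether your sketch would actually deliver the stated bound.)

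The gap is in the norm analysis, which is the entire content of the lemma. Your own per-gate recurrence gives $\lVert\bold R_{\textsf{NAND}}\rVert \le m\,\lVert\bold R_1\rVert + \lVert\bold R_2\rVert$: each column of $\bold G^{-1}(\bold A_2)$ is a $\{0,1\}$-vector, so it selects and sums up to $m$ columns of $\bold R_1$, and no appeal to the ``sparsity of $\bold G^{-1}$'' removes that factor of $m$ on the $\bold R_1$ side (Lemma \ref{GTrap}, a short basis for $\bold{\Lambda}_q^{\perp}(\bold G)$, is not relevant to it). A balanced, gate-by-gate induction on wire depth therefore yields $\lVert\bold R_C\rVert = m^{\Theta(d)}\sqrt m = \ell^{\Theta(c\,\mathrm{log}\,m)}$, which is superpolynomial, and the claimed inductive bound $\lVert\bold R_v\rVert \le 4^{d_v}\cdot O(\sqrt m)$ contradicts the very recurrence you derived. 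The mechanism that actually produces the $4^d=\ell^{2c}$ factor in BV14 is different: the depth-$d$ circuit is first converted by Barrington's theorem into a width-5 permutation branching program of length $4^d$, and that program is evaluated sequentially in such a way that the matrix multiplied by a $\bold G^{-1}(\cdot)$ term at each step is always a \emph{fresh input} matrix of norm $\sqrt m$, never the accumulated one --- exploiting exactly the asymmetry of your recurrence, in which the second operand enters only additively. Each step then contributes $O(m\cdot\sqrt m)=O(m^{3/2})$ additively, for a total of $O(4^d m^{3/2}) = O(\ell^{2c} m^{3/2})$. Without Barrington's theorem (or an equivalent restructuring of the evaluation order), your construction does not meet the stated bound, so the argument is incomplete at precisely the point the lemma is about.
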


\section{Our Scheme}\label{OurConstruction}

In this section, we present the construction of our $\textsf{RS}$ scheme in Sect. \ref{Construction}, and give the concrete parameters in Sect. \ref{Correctness}. Then we prove the unforgeability and anonymity in Sect. \ref{Unforgeability} and Sect. \ref{Anonymity}, respectively.

\subsection{Construction}\label{Construction}

\noindent$\textsf{Setup}(1^n;\gamma_{\textsf{st}})$
\vspace{-0.5em}
\begin{enumerate}
	\item On input a security parameter $n$, sets the modulo $q$, lattice dimension $m$, PRF key length $k$, message length $t$, let $\gamma_{\textsf{st}}$ be the randomness that use to choose Gaussian parameters and then chooses Gaussian parameters $\sigma$ and $\sigma'$ as specified in Sect. \ref{Correctness} below.	
	\item Select a secure $\textsf{PRF}:\{0,1\}^k\times \{0,1\}^t \rightarrow \{0,1\}$, express it as a $\textsf{NAND}$ Boolean circuit ${C_\textsf{PRF}}$.
	\item Output $\textsf{PP}=(q,m,k,t,\sigma,\sigma', \textsf{PRF},\gamma_{\textsf{st}})$.
\end{enumerate}
Note that including the randomness $\gamma_{\textsf{st}}$ in $\textsf{PP}$ is to guarantee the public has no concerns about the existence of trapdoors.

In the following, $\textsf{PP}$ are implicit input parameters to every algorithm.
\vspace{1em}

\noindent$\textsf{KeyGen}()$
\vspace{-0.5em}
\begin{enumerate}
	
	\item Sample $(\bold A,\bold S_{\bold A})\leftarrow \textsf{TrapGen}(1^n,1^m,q)$ where $\bold A\in \mathbb{Z}_q^{n\times m}$, $\bold S_{\bold A}\in \mathbb{Z}^{m\times m}$.
	\item Select a $\textsf{PRF}$ key $\bold k=\big(k_1,k_2,\dots,k_k\big)\xleftarrow{\$}\{0,1\}^k$.

	\item Select $\bold A_0, \bold A_1, \bold C_0, \bold C_1 \xleftarrow{\$}\mathbb{Z}_q^{n\times m}$.

	\item For $j=1$ to $k$, select $\bold B_{j}\xleftarrow{\$} \mathbb{Z}_q^{n\times m}$.
	
	\item Output $\textsf{vk}=(\bold A,(\bold A_{0},\bold A_{1}), \{\bold B_{j}\}_{j\in[k]},(\bold C_{0},\bold C_{1}))$ and $\textsf{sk}=(\bold S_{\bold A}, \bold k)$.
\end{enumerate}
\vspace{0.5em}

\noindent$\textsf{Sign}(\boldsymbol{\mu},\textsf{R},\textsf{sk})$
\vspace{-0.2em}
\begin{enumerate}
	\item On input a message $\boldsymbol{\mu}=(\mu_1,\dots,\mu_t)\in\{0,1\}^t$, a ring of verification keys $\textsf{R}=(\textsf{vk}^{(1)},\dots,\textsf{vk}^{(N)})$ where \[\textsf{vk}^{(i)}=(\bold A^{(i)},(\bold A_{0}^{(i)},\bold A_{1}^{(i)}), \{\bold B_{j}^{(i)}\}_{j\in[k]},(\bold C_{0}^{(i)},\bold C_{1}^{(i)}))\] and a signer's signing key $\textsf{sk}:=\textsf{sk}^{(\bar i)}$ where $\bar i \in [N]$ be the index of the signer in the ring $\textsf{R}$.

	\item Compute $b=\textsf{PRF}(\bold k^{\bar i}, \boldsymbol{\mu})$.

	\item For $i=1$ to $N$, compute $\bold A_{C_\textsf{PRF},\boldsymbol{\mu}}^{(i)}\in \mathbb{Z}_q^{n\times m}$ by \[\bold A_{C_\textsf{PRF},\boldsymbol{\mu}}^{(i)}=\textsf{Eval}\big(C_\textsf{PRF},(\{\bold B_{j}^{(i)}\}_{j\in[k]},\bold C_{\mu_1}^{(i)},\dots,\bold C_{\mu_t}^{(i)})\big)\] and set $\bold F_{C_\textsf{PRF},\boldsymbol{\mu},1-b}^{(i)}=\big[\bold A^{(i)} | \bold A_{1-b}^{(i)}-\bold A_{C_\textsf{PRF},\boldsymbol{\mu}}^{(i)}\big]$.

	\item Let $\bold F'_{1-b}=\big[\bold F_{C_\textsf{PRF},\boldsymbol{\mu},1-b}^{(1)}|\dots|\bold F_{C_\textsf{PRF},\boldsymbol{\mu},1-b}^{(N)}\big]$. Compute \[\bold{S}_{\bold F'_{1-b}}\leftarrow\textsf{BasisRand}\big(\textsf{BasisExt}(\bold F'_{1-b}, \bold{S}_{\bold A^{(\bar i)}}), \sigma\big).\]

	\item Compute $\bold x'\leftarrow\textsf{SampleGaussian}(\bold F'_{1-b},\bold S_{\bold F'_{1-b}},\bold 0,\sigma')$ such that $\bold F'_{1-b}\cdot \bold x' = \bold 0 \pmod q$.
	
	\item Output the signature $\Sigma=\bold x'$.
\end{enumerate}
\vspace{0.5em}

\noindent$\textsf{Ver}(\boldsymbol{\mu},\textsf{R},\Sigma)$
\vspace{-0.2em}
\begin{enumerate}
	\item On input a message $\boldsymbol{\mu}=(\mu_1,\dots,\mu_t)\in\{0,1\}^t$, a ring of verification keys $\textsf{R}=(\textsf{vk}^{(1)},\dots,\textsf{vk}^{(N)})$, and a signature $\Sigma=\bold x'$.

	\item For $i=1$ to $N$, check if $\lVert{\bold x'}\rVert\leq \sigma\sqrt{2Nm}$ holds, otherwise return 0.

	\item For $i=1$ to $N$, compute $\bold A_{C_\textsf{PRF},\boldsymbol{\mu}}^{(i)}$ as in \textsf{Sign} algorithm. 
	\item For $b\in\{0,1\}$, set $\bold F_{C_\textsf{PRF},\boldsymbol{\mu},b}^{(i)}=\big[\bold A^{(i)} | \bold A_{b}^{(i)}-\bold A_{C_\textsf{PRF},\boldsymbol{\mu}}^{(i)}\big]$ and $\bold F'_b=\big[\bold F_{C_\textsf{PRF},\boldsymbol{\mu},b}^{(1)} |\dots|\bold F_{C_\textsf{PRF},\boldsymbol{\mu},b}^{(N)}\big]$. 
	\item Check if $\bold F'_b\cdot\bold x'=\bold 0\pmod q$ holds for $b=0$ or $1$, return 1, otherwise return 0.
\end{enumerate}

\subsection{Correctness and Parameters}\label{Correctness}
We now show the correctness of \textsf{RS}. By Lemma \ref{SampleGaussian} the signature $\Sigma=\bold x'$ follows the distribution $\mathcal{D}_{\bold{\Lambda}_q^\perp(\bold F'_b),\sigma'}$. By Lemma \ref{Lemma1}, the length of $\bold x'$ at most $\sigma\sqrt{2Nm}$ with overwhelming probability. Therefore, the signature is accepted by the \textsf{Ver} algorithm.

We then explain the parameters choosing. We employ the work \cite{LLW20} to instantiate our PRF, which based on standard LWE assumption with polynomial modulus $q=n^{\omega(1)}$. Let $n$ be the security parameter, let the message length be $t=t(n)$ and the secret key length of PRF be $k=k(n)$. Let $\ell=t+k$ be the input length of PRF. To ensure that hard lattices with good short bases can be generated by $\textsf{TrapGen}$ in Lemma \ref{TrapGen}, we need to set $m=6n^{1+\delta}$ where $\delta>0$ is a constant such that $n^{\delta}>O({\rm log}~n)$. To ensure the randomized basis is statistically independent with the original basis as required in Lemma \ref{BasisRand}, we need to set $\sigma=O\big(\ell^{2c}\cdot m^{3/2}\big)\cdot \omega\big(\sqrt{{\rm log}~Nm}\big)$ (see the unforgeability proof below). To ensure that the distribution on the output of $\textsf{SampleGaussian}$ statistically close to the distribution $\mathcal{D}_{\bold{\Lambda}_q^\perp(\bold F'),\sigma'}$, we need to set $\sigma'$ sufficiently large that is $\sigma'=\sqrt{N}\cdot O\big(\ell^{2c}\cdot m^{2}\big)\cdot \omega\big({{\rm log}~Nm}\big)$ (see the unforgeability proof below). To ensure that vectors sampled using a trapdoor are difficult SIS solutions, we need to set $\beta \geq O(\ell^{2c}\cdot m^{3/2})\cdot \sigma\sqrt{2m}$ for some constant $c$ (see the unforgeability proof below). To ensure our construction based on SIS has a worst-case lattice reduction as defined in Definition \ref{DefofSIS}, we need to set $q\geq \beta \cdot \omega(\sqrt{n~{\rm log}~n})$.

To satisfy the above requirements, let $n$ be the security parameter, the other parameters can be instantiated in various ways. For a typical choice, we choose a function $\omega(\sqrt{{\rm log}~m})$, $N=N(n)$, set the parameters $(m,\sigma,\sigma',\beta,q)$ as follows
\begin{gather*}
    m=6n^{1+\delta}\\ 
    \sigma=O\big(\ell^{2c}\cdot m^{3/2}\big)\cdot \omega\big(\sqrt{{\rm log}~Nm}\big)\\
    \sigma'=\sqrt{N}\cdot O\big(\ell^{2c}\cdot m^{2}\big)\cdot \big(\omega\big(\sqrt{{\rm log}~Nm}\big)\big)^2\\
    \beta=N\cdot O(\ell^{4c}\cdot m^{7/2})\cdot \omega\big(\sqrt{{\rm log}~Nm}\big)\\
    q=N\cdot O(\ell^{4c}\cdot m^{4})\cdot \big(\omega\big(\sqrt{{\rm log}~Nm}\big)\big)^2
\end{gather*}

\subsection{Unforgeability}\label{Unforgeability}
We now prove the unforgeability of $\textsf{RS}$. 

\begin{theorem}[Unforgeability]\label{UnforgeabilityProof}
Let $m,q,\beta,\sigma$ be some polynomials in the security parameter $n$. For large enough $\sigma=O\big(\ell^{2c}\cdot m^{3/2}\big)\cdot \omega\big(\sqrt{{\rm log}~Nm}\big)$, $\sigma'=\sqrt{N}\cdot O\big(\ell^{2c}\cdot m^{2}\big)\cdot \big(\omega\big(\sqrt{{\rm log}~Nm}\big)\big)^2$ and $\beta \geq N\cdot O(\ell^{2c}\cdot m^{3/2})\cdot \sigma\sqrt{2m}$, if the hardness assumption {\rm $\textsf{SIS}_{q,\beta}$} holds and the based PRF is secure, the {\rm \textsf{RS}} scheme is {\rm \textsf{UnfAdvKey}} secure.
\end{theorem}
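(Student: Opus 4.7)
My plan is to reduce the $\textsf{UnfAdvKey}$ game to the $\textsf{SIS}_{q,\beta}$ assumption together with the security of the underlying $\textsf{PRF}$ (Def.~\ref{DefofPRF}), adapting the $\bold G$-trapdoor puncturing paradigm of \cite{BL16} to the ring setting. Given a SIS challenge $\bold A^*\in\mathbb{Z}_q^{n\times m}$, the simulator $\mathcal{S}$ guesses an index $i^*\xleftarrow{\$}[N]$ (the user whose block will carry the challenge in the forgery) and a PRF key $\bold k^*\xleftarrow{\$}\{0,1\}^k$. It sets $\bold A^{(i^*)}:=\bold A^*$, samples $\bold R_{B_j},\bold R_{A_b},\bold R_{C_b}\xleftarrow{\$}\{-1,1\}^{m\times m}$, and defines $\bold B_j^{(i^*)}:=\bold A^*\bold R_{B_j}+k_j^*\bold G$, $\bold A_b^{(i^*)}:=\bold A^*\bold R_{A_b}+b\bold G$, and $\bold C_b^{(i^*)}:=\bold A^*\bold R_{C_b}+b\bold G$. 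By Lemma~\ref{LeftOverHash} the adversary's view of $\textsf{vk}^{(i^*)}$ is statistically close to uniform, and for $i\neq i^*$ the keys are generated honestly so $\mathcal{S}$ retains each trapdoor $\bold S_{\bold A^{(i)}}$.

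For a signing query $(\boldsymbol\mu,\textsf{R},s)$ with $s\neq i^*$, $\mathcal{S}$ signs honestly using $\bold S_{\bold A^{(s)}}$. If $s=i^*$, it computes $b:=\textsf{PRF}(\bold k^*,\boldsymbol\mu)$ and, by the key-homomorphism of $\textsf{Eval}$ (Lemma~\ref{Lemma7}), obtains an efficiently computable short $\bold R^*_{\boldsymbol\mu}$ with $\bold A^{(i^*)}_{C_\textsf{PRF},\boldsymbol\mu}=\bold A^*\bold R^*_{\boldsymbol\mu}+b\bold G$. Thus the right-hand block of $\bold F^{(i^*)}_{C_\textsf{PRF},\boldsymbol\mu,1-b}$ equals $\bold A^*(\bold R_{A_{1-b}}-\bold R^*_{\boldsymbol\mu})+(1-2b)\bold G$ with $\bold G$-coefficient $\pm1$. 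Combining Lemma~\ref{BasisExtABB} with the public basis $\bold S_{\bold G}$ (Lemma~\ref{GTrap}) yields a short basis of this block's orthogonal lattice; $\textsf{BasisExt}$ lifts it to a basis of $\bold F'_{1-b}$, $\textsf{BasisRand}$ rerandomizes it, and $\textsf{SampleGaussian}$ produces the signature. Lemma~\ref{PropertyOfBasisRand} then ensures the simulated distribution is statistically close to the real one.

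When $\mathcal{A}$ returns a forgery $(\boldsymbol\mu^*,\textsf{R}^*,\bold x^*)$ accepted under some $b^*\in\{0,1\}$, $\mathcal{S}$ wins provided (i) $\textsf{vk}^{(i^*)}\in\textsf{R}^*$, which occurs with probability at least $1/N$ over the index guess, and (ii) $b^*=\textsf{PRF}(\bold k^*,\boldsymbol\mu^*)$. To argue (ii), I would insert a PRF-hybrid that replaces $\textsf{PRF}(\bold k^*,\cdot)$ by a uniformly random function on inputs never submitted to $\textsf{OSign}(\cdot,\cdot,i^*)$; after this switch, $\textsf{PRF}(\bold k^*,\boldsymbol\mu^*)$ is information-theoretically uniform and independent of $\mathcal{A}$'s view (the freshness condition $(\boldsymbol\mu^*,\textsf{R}^*,\Sigma^*)\notin\textsf{L}$, together with the independence of $\bold k^*$ from other users' signing transcripts, blocks trivial leakage), so event (ii) holds with probability $\tfrac12-\textsf{negl}(n)$. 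Under (i)--(ii) the $i^*$-block of the forgery equation reduces to $[\bold A^*\,|\,\bold A^*(\bold R_{A_{b^*}}-\bold R^*_{\boldsymbol\mu^*})]\bold x^{*(i^*)}=\bold A^*\bold z$ for a short $\bold z$, and the full equation $\bold F'_{b^*}\bold x^*=\bold 0$ yields $\bold A^*\bold z=\bold 0\pmod q$ with $\|\bold z\|\le\beta$ by the parameter choices in Sect.~\ref{Correctness}. Nontriviality $\bold z\neq\bold 0$ follows because $\bold R_{A_{b^*}}$ is statistically hidden from $\mathcal{A}$ given the rest of $\mathcal{A}$'s view (Lemma~\ref{LeftOverHash}).

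The main obstacle will be the final extraction step: the honestly generated blocks for $i\neq i^*$ contribute extra terms to the linear combination that, a priori, need not involve $\bold A^*$. To handle this, I would also embed $\bold A^*$ into every $\bold A^{(i)}$ via a known short transformation (keeping, for each possible signer $s$, a $\pm1$ $\bold G$-coefficient in the $s$-block so that signing simulation still goes through via the same $\bold G$-trapdoor argument). The forgery equation then collapses to an expression of the shape $\bold A^*\bold z+(\sum_i c^{(i)}\bold x^{*(i,2)})\bold G=\bold 0\pmod q$, and a case analysis on the $\bold G$-term together with the $\bold R$-hiding lemma will produce a short nonzero SIS solution with nonnegligible probability, contradicting $\textsf{SIS}_{q,\beta}$.
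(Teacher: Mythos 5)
There is a genuine gap, and it sits exactly where you flag it: the extraction step. Your reduction plants the SIS challenge $\bold A^*$ only in the guessed block $i^*$ and keeps honest trapdoors for $i\neq i^*$, so the forgery identity $\sum_i\bold F^{(i)}_{C_\textsf{PRF},\boldsymbol\mu^*,b^*}\bold x^{*(i)}=\bold 0$ only yields $\bold A^*\bold z=-\sum_{i\neq i^*}\bold A^{(i)}(\cdots)\neq\bold 0$ in general; knowing the trapdoors of the other $\bold A^{(i)}$ does not let you cancel their contribution, so no SIS solution for $\bold A^*$ comes out. Your proposed repair (embedding $\bold A^*$ into every block while keeping a $\pm1$ $\bold G$-coefficient per potential signer) is the right instinct but is left unresolved: your final equation still carries a residual term $(\sum_i c^{(i)}\bold x^{*(i,2)})\bold G$, and the promised ``case analysis on the $\bold G$-term'' is not something that obviously closes (an identity $\bold A^*\bold z=c\,\bold G\bmod q$ with $c\neq 0$ is not a SIS solution). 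The paper sidesteps both problems at once: it takes the SIS challenge to be a single wide matrix $\bold A\in\mathbb{Z}_q^{n\times Nm}$, parses it as $[\bold A^{(1)}|\dots|\bold A^{(N)}]$ so that \emph{every} user's block is part of the challenge (no index guess, no $1/N$ loss), and gives all simulated users the \emph{same} PRF key $\bold k$, so that on the branch $b^*=\textsf{PRF}(\bold k,\boldsymbol\mu^*)$ every block becomes $[\bold A^{(i)}\,|\,\bold A^{(i)}(\bold R_{\bold A_{b^*}}^{(i)}-\bold R_{C_\textsf{PRF},\boldsymbol\mu^*}^{(i)})]$ with $\bold G$-coefficient identically zero; the forgery then collapses to $\bold A\hat{\bold x}^*=\bold 0$ for a short nonzero $\hat{\bold x}^*$ obtained by padding with zeros the blocks of $\textsf{R}^*\subseteq\textsf{S}$.

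Two smaller points. First, your signing simulation for $s=i^*$ is correct and matches the paper (the $(1-2b)\bold G$ coefficient plus $\textsf{BasisExtABB}$, $\textsf{BasisExt}$, $\textsf{BasisRand}$, $\textsf{SampleGaussian}$), but in the paper's all-blocks-punctured setup this $\bold G$-trapdoor route must be used for \emph{every} signer, which works because the signer's own block always carries the $\pm1$ coefficient. Second, your explicit PRF-hybrid argument that the forgery lands on the branch $b^*=\textsf{PRF}(\bold k,\boldsymbol\mu^*)$ with probability about $1/2$ is actually more careful than what the paper writes (the paper simply aborts on the wrong branch and argues PRF unpredictability informally in its Claims); that part of your proposal is a genuine improvement in rigor, but it does not compensate for the unfinished extraction.
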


\noindent \textit{Proof.} Consider the following security game between a adversary $\mathcal{A}$ and a simulator $\mathcal{S}$. Upon receiving a challenge $\bold A\in \mathbb{Z}_q^{n\times m'}$ that is formed by $m'=m\cdot N$ uniformly random and independent samples from $\mathbb{Z}_q^{n}$, parsing $\bold A$ as $\bold A = \big[\bold A^{(1)}|\dots|\bold A^{(N)}\big]$, $\mathcal{S}$ simulates as follows.

\noindent$\textbf{Setup Phase}$. $\mathcal{S}$ takes as input a security parameter $n$ and a randomness $\gamma_{\textsf{st}}$ to invoke $\textsf{PP}\leftarrow \textsf{Setup}(1^n;\gamma_{\textsf{st}})$ algorithm. Then $\mathcal{S}$ simulates as follows:

\noindent Select a PRF key $\bold k=(k_1,k_2,\dots,k_k)\xleftarrow{\$}\{0,1\}^k$.

\noindent For $i=1$ to $N$, $b\in \{0,1\}$: 
\begin{itemize}
	\item Choose $\bold R_{\bold A_{b}}^{(i)}, \bold R_{\bold C_{b}}^{(i)}\xleftarrow{\$} \{1,-1\}^{m\times m}$.
	\item Construct $\bold A_{b}^{(i)}=\bold A^{(i)}\bold R_{\bold A_{b}^{(i)}}+b\bold G$ and $\bold C_{b}^{(i)}=\bold A^{(i)}\bold R_{\bold C_{b}^{(i)}}+b\bold G$ where $\bold G$ is the gadget matrix. 

    \end{itemize}
    For $j=1$ to $k$: 

    \begin{itemize}
        \item Choose $\bold R_{\bold B_{j}^{(i)}}\xleftarrow{\$} \{1,-1\}^{m\times m}$ and construct $\bold B_{j}^{(i)}=\bold A^{(i)}\bold R_{\bold B_{j}^{(i)}}+k_j\bold G$. 

\end{itemize}

\noindent$\mathcal{S}$ sets $\textsf{vk}^{(i)}=\big(\bold A^{(i)},(\bold A_{0}^{(i)},\bold A_{1}^{(i)}), \{\bold B_{j}^{(i)}\}_{j\in[k]},(\bold C_{0}^{(i)},\bold C_{1}^{(i)})\big)$ and $\textsf{S}=\{\textsf{vk}^{(i)}\}_{i\in[N]}$, then sends $(\textsf{PP},\textsf{S},\gamma_{\textsf{st}})$ to $\mathcal{A}$.

\noindent$\textbf{Probing Phase}$. $\mathcal{A}$ adaptively issues tuples for querying the signing oracle $\textsf{OSign}(\cdot,\cdot,\cdot)$. For simplicity, here consider only one tuple $(\boldsymbol{\mu},\textsf{R},s)$ where $s\in[N]$, and requires that $\textsf{vk}^{(s)}\in \textsf{S}\cap\textsf{R}$. Assume the ring $\textsf{R}=\big(\textsf{vk}^{(1)},\dots,$ $\textsf{vk}^{(N')}\big)$, parse the $\textsf{vk}^{(s)}=(\bold A^{(s)},(\bold A_{0}^{(s)},\bold A_{1}^{(s)}), \{\bold B_{j}^{(s)}\}_{j\in[k]},(\bold C_{0}^{(s)},\bold C_{1}^{(s)}))$ and let $N'=\lvert\textsf{R}\rvert$. $\mathcal{S}$ does the following to response the signature.

	Compute $b=\textsf{PRF}(\bold k, \boldsymbol{\mu})$.
	
	For $i'=1$ to $N'$, compute the evaluated matrix $\bold A_{C_\textsf{PRF},\boldsymbol{\mu}}^{(i')}$ by $\textsf{Eval}\big(C_\textsf{PRF},(\{\bold B_{j}^{(i')}\}_{j\in[k]},\bold C_{\mu_1}^{(i')},\dots,\bold C_{\mu_t}^{(i')})\big)$. Then set
\begin{align*}
		\bold F_{C_\textsf{PRF},\boldsymbol{\mu}, 1-b}^{(i')}&=\big[\bold A^{(i')}\big| \bold A_{1-b}^{(i')}-\bold A_{C_\textsf{PRF},\boldsymbol{\mu}}^{(i')}\big]\\
		&=\big[\bold A^{(i')}\big| \bold A^{(i')} \big(\bold R_{1-b}^{(i')}-\bold R_{C_\textsf{PRF},\boldsymbol{\mu}}^{(i')}\big)+(1-2b)\bold G\big]
\end{align*}

	Let $\bold F'_{1-b}=\big [\bold F_{C_\textsf{PRF},\boldsymbol{\mu}, 1-b}^{(1)}|\dots|\bold F_{C_\textsf{PRF},\boldsymbol{\mu}, 1-b}^{(N')}\big ]$ and $\bar{\bold R}^{(s)}=\bold R_{1-b}^{(s)}-\bold R_{C_\textsf{PRF},\boldsymbol{\mu}}^{(s)}$. Invoking 
	\begin{gather*}
	\bold S_{\bold F_{C_\textsf{PRF},\boldsymbol{\mu}, 1-b}^{(i')}}\leftarrow \textsf{BasisExtABB}\big(\bold A^{(s)}, \bold G, \bar{\bold R}^{(s)}, \bold {S_G}\big)\\
	\bar{\bold S}_{\bold F'_{1-b}}\leftarrow \textsf{BasisExt}\big(\bold S_{\bold F_{C_\textsf{PRF},\boldsymbol{\mu}, 1-b}^{(i')}}, \bold F'_{1-b}\big)\\
	\bold{S_{F'_{1-b}}}\leftarrow\textsf{BasisRand}\big(\bar{\bold S}_{\bold F'_{1-b}}, \sigma\big)
	\end{gather*}
then compute $\bold x'\leftarrow\textsf{SampleGaussian}\big(\bold F'_{1-b}, \bold{S_{F'_{1-b}}}, \bold 0, \sigma'\big)$ such that $\bold F'_{1-b}\cdot \bold x' = \bold 0 \pmod q$. 

$\mathcal{S}$ responses the signature $\Sigma=\bold x'$ for the query tuple $\big(\boldsymbol{\mu},\textsf{R},s\big)$ to $\mathcal{A}$ and adds $(\boldsymbol{\mu},\textsf{R},\Sigma)$ to a list {\rm \textsf{L}} which $\mathcal{S}$ initialized in prior.

\noindent $\textbf{Exploiting the forgery}$. $\mathcal{A}$ outputs a forgery signature tuple $(\boldsymbol{\mu}^*,\textsf{R}^*,\Sigma^*)$. Let $N^*=|\textsf{R}^*|$. Parse $\boldsymbol{\mu}^*=(\mu_1^*,\dots,\mu_t^*)$, $\textsf{R}^*=({\textsf{vk}}^{(1)},\dots,{\textsf{vk}}^{(N^*)})$, $\textsf{vk}^{(i^*)}=(\bold A^{(i^*)},(\bold A_{0}^{(i^*)},\bold A_{1}^{(i^*)}),\{\bold B_{j}^{(i^*)}\}_{j\in[k]},(\bold C_{0}^{(i^*)},\bold C_{1}^{(i^*)}))$ and $\Sigma^*=\bold {x}'^*$ where $\bold {x}'^*=\big(({\bold {x}^*}^{(1)})^\top|\dots|({\bold {x}^*}^{(N^*)})^\top\big)^\top$. Separate ${\bold {x}^*}^{(i^*)}$ into $\big(({{\bold {x}}_1^*}^{(i^*)})^\top|({{\bold {x}}_2^*}^{(i^*)})^\top\big)^\top$. $\mathcal{S}$ does the following to exploit the forgery.
\begin{itemize}
	\item Check if $(\boldsymbol{\mu}^*,\textsf{R}^*,\Sigma^*) \in \textsf{L}$ or $\lVert {\bold {x}'}^*\rVert> \sigma\sqrt{2N^*m}$, $\mathcal{S}$ aborts.

	\item For $i^*=1$ to $N^*$, compute the matrix $\bold A_{C_\textsf{PRF},\boldsymbol{\mu}^*}^{(i^*)}$ as in the probing phase above. Then, compute the matrix $\bold F_{C_\textsf{PRF},\boldsymbol{\mu}^*, 1-b^*}^{(i^*)}=\Big[\bold A^{(i^*)}\big|\bold A_{1-b^*}^{(i^*)}-\bold A_{C_\textsf{PRF},\boldsymbol{\mu}^*}^{(i^*)}\Big]$ (resp., $\bold F_{C_\textsf{PRF},\boldsymbol{\mu}^*, b^*}^{(i^*)}=\Big[\bold A^{(i^*)}\big|\bold A_{b^*}^{(i^*)}-\bold A_{C_\textsf{PRF},\boldsymbol{\mu}^*}^{(i^*)}\Big]$) and $\bold F'_{1-b^*}=\Big[\bold F_{C_\textsf{PRF},\boldsymbol{\mu},1-b^*}^{(1)}\big|\dots\big|\bold F_{C_\textsf{PRF},\boldsymbol{\mu},1-b^*}^{(N)}\Big]$ (resp., $\bold F'_{b^*}=\Big[\bold F_{C_\textsf{PRF},\boldsymbol{\mu},b}^{(1)}\big|\dots\big|\bold F_{C_\textsf{PRF},\boldsymbol{\mu},b^*}^{(N)}\Big]$).

	\item Check if $\bold F'_{1-b^*}\cdot \bold {x'^*}=\bold 0\pmod q$ holds, $\mathcal{S}$ aborts. Therefore, it holds that $\bold F'_{b^*}\cdot \bold {x'^*}=\sum_{i^*\in[N^*]} \Big[\bold A^{(i^*)}\big|\bold A^{(i^*)}\big(\bold R_{\bold A_{b^*}^{(i^*)}}-\bold R_{C_\textsf{PRF},\boldsymbol{\mu}^*}^{(i^*)}\big)\Big]\cdot \bold {x^*}^{(i^*)}=\bold 0\pmod q$.

	\end{itemize}
	 Therefore, we have $\sum_{i^*\in[N^*]} \bold A^{(i^*)}\cdot {\bar {\bold x}^*}{^{(i^*)}}=\bold 0\pmod q$ where ${\bar {\bold x}^*}{^{(i^*)}}=\Big({{\bold x}_1^*}{^{(i^*)}}+\big(\bold R_{\bold A_{b^*}}^{(i^*)}-\bold R_{C_\textsf{PRF},\boldsymbol{\mu}^*}^{(i^*)}\big)\cdot {{\bold x}_2^*}{^{(i^*)}}\Big)$ then we have $\sum_{i^*\in[N^*]} \bold A^{(i^*)}\cdot{\bar {\bold x}^*}{^{(i^*)}}=\bold 0\pmod q$. Let ${\bar {\bold x}^*}$ be the concatenation of $\{{\bar {\bold x}^*}{^{(i^*)}}\}$ i.e., ${\bar {\bold x}^*}=\Big(\big({\bar {\bold x}^*}{^{(1^*)}}\big)^\top\big|\dots\big|\big({\bar {\bold x}^*}{^{(N^*)}}\big)^\top\Big)^\top$. Note that $\{\bold A^{(i^*)}\}_{i^*\in[N^*]}$ is a subset of $\{\bold A^{(i)}\}_{i\in[N]}$, and we know $\bold A=\big[\bold A^{(1)}|\dots|\bold A^{(N)}\big]$. Therefore, by inserting zeros into ${\bar {\bold x}^*}$, $\mathcal{S}$ can obtain a nonzero ${\hat {\bold x}^*}$ such that $\bold A{\hat {\bold x}^*}=\bold 0\pmod q$. Therefore, $\mathcal{S}$ can output ${\hat {\bold x}^*}$ as a {\rm $\textsf{SIS}_{q,\beta}$} solution.

\newtheorem{claim}{Claim}

\begin{claim}\label{Claim1}
The set of verifications keys {\rm $\textsf{S}$} that simulated by {\rm $\mathcal{S}$} is statistically close to those in the real attack.
\end{claim}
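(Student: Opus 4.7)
The plan is to compare the distributions of $\textsf{S}$ component-by-component and coordinate-by-coordinate, and then stitch these pieces together by a hybrid argument over all $i \in [N]$ and over all the matrices inside each $\textsf{vk}^{(i)}$. Concretely, each verification key has a ``base'' matrix $\bold A^{(i)}$ and $4+k$ auxiliary matrices $\bold A_0^{(i)},\bold A_1^{(i)},\bold C_0^{(i)},\bold C_1^{(i)},\{\bold B_j^{(i)}\}_{j\in[k]}$; there are $N(5+k) = \mathrm{poly}(n)$ matrices in total, so a polynomial-length hybrid will suffice as long as each single hop costs only $\mathrm{negl}(n)$.

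First I would compare the base matrices. In the real experiment $\bold A^{(i)}$ is produced by $\textsf{TrapGen}$ and, by Lemma \ref{TrapGen}, is within negligible statistical distance of uniform on $\mathbb{Z}_q^{n\times m}$. In the simulation $\bold A^{(i)}$ is a block of the $\textsf{SIS}$ challenge and is therefore exactly uniform. So the marginals of the $\bold A^{(i)}$'s already agree up to $\mathrm{negl}(n)$, and in both worlds the $\bold A^{(i)}$'s are independent across $i$.

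Next I would handle each auxiliary matrix. In the real scheme every such matrix is drawn uniformly and independently from $\mathbb{Z}_q^{n\times m}$. In the simulation each one has the form $\bold A^{(i)}\bold R + \alpha\bold G$, where $\bold R \xleftarrow{\$}\{-1,1\}^{m\times m}$ is a fresh independent matrix and $\alpha\in\{0,1\}$ is a fixed bit (either $b$ for the $\bold A_b^{(i)},\bold C_b^{(i)}$ matrices or $k_j$ for $\bold B_j^{(i)}$). The key invocation is Lemma \ref{LeftOverHash}: conditioning on $\bold A^{(i)}$, the pair $(\bold A^{(i)},\bold A^{(i)}\bold R)$ is statistically close to $(\bold A^{(i)},\bold U)$ with $\bold U$ independent uniform; then adding the public deterministic shift $\alpha\bold G$ preserves statistical distance. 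Repeating this for every auxiliary matrix (each of which uses its own independent $\bold R$) and chaining the $N(4+k)$ hops gives total distance $\mathrm{negl}(n)$.

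The only nonroutine obligation is to check that the parameter regime licenses Lemma \ref{LeftOverHash}, which requires $m > (n+1)\log q + \omega(\log n)$. Using the choice $m=6n^{1+\delta}$ with $n^\delta > \omega(\log n)$ from Sect. \ref{Correctness} and the fact that $q$ is polynomial in $n$ (so $\log q = O(\log n)$), this bound is satisfied comfortably; the PRF key bits $k_j$ and the hard-coded bits $b$ enter only as additive shifts and therefore play no role in this verification. Combining the TrapGen step, the $N(4+k)$ leftover-hash steps, and the hybrid summation yields the claim.
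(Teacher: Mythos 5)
Your proposal is correct and follows essentially the same route as the paper's proof: Lemma \ref{TrapGen} for the statistical closeness of the real $\bold A^{(i)}$ to the uniform SIS-challenge blocks, and Lemma \ref{LeftOverHash} for replacing each $\bold A^{(i)}\bold R+\alpha\bold G$ by a uniform matrix. The only difference is that you make explicit the hybrid over the $N(4+k)$ auxiliary matrices and the verification that $m=6n^{1+\delta}$ satisfies the leftover-hash-lemma hypothesis, both of which the paper leaves implicit.
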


\begin{proof}\renewcommand{\qedsymbol}{}
In the real scheme, the matrices $\{\bold A^{(i)}\}_{i\in[N]}$ generated by $\textsf{TrapGen}$. In the simulation, $\{\bold A^{(i)}\}_{i\in[N]}$ have uniform distribution as it comes from the $\textsf{SIS}$ challenger that are formed by $m'$ uniformly random and independent samples from $\mathbb{Z}_q^{n}$. By Lemma \ref{TrapGen}, $\{\bold A^{(i)}\}_{i\in[N]}$ generated in the simulation has right distribution except a negligibly statistical error. For the matrices $(\bold A_{0}^{(i)},\bold A_{1}^{(i)}), \{\bold B_{j}^{(i)}\}_{j\in[k]}$ and $(\bold C_{0}^{(i)},\bold C_{1}^{(i)})$ generated in the simulation have distribution that is statistically close to uniform distribution in $\mathbb{Z}_q^{n\times m}$ by Lemma \ref{LeftOverHash}. Therefore, the set of verifications keys $\textsf{S}$ given to $\mathcal{A}$ is statistically close to those in the real attack.
\end{proof}

\begin{claim}\label{Claim2}
The replies of the signing oracle {\rm \textsf{OSign}$(\cdot,\cdot,\cdot)$} simulated by $\mathcal{S}$ is statistically close to those in the real attack when set $\sigma=O\big(\ell^{2c}\cdot m^{3/2}\big)\cdot \omega\big(\sqrt{{\rm log}~Nm}\big)$ and $\sigma'=\sqrt{N}\cdot O\big(\ell^{2c}\cdot m^{2}\big)\cdot \big(\omega\big(\sqrt{{\rm log}~Nm}\big)\big)^2$.
\end{claim}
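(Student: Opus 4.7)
The plan is to show that, in both the real attack and in $\mathcal{S}$'s simulation, the signature is produced by applying $\textsf{SampleGaussian}$ to a $\textsf{BasisRand}$-randomized basis of the same lattice $\bold{\Lambda}_q^\perp(\bold F'_{1-b})$, and that with parameters as stated, both processes yield distributions that are statistically close to $\mathcal{D}_{\bold{\Lambda}_q^\perp(\bold F'_{1-b}),\sigma'}$, independent of which trapdoor seeded the process. First I would note that $\bold F'_{1-b}$ itself is identical in the real scheme and the simulation: the signer constructs it from the verification keys in $\textsf{R}$ and the public value $b=\textsf{PRF}(\bold k,\boldsymbol{\mu})$, so only the way a short basis is obtained differs between the two.

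Next, I would bound the Gram--Schmidt norm of the pre-randomization basis in each case. In the real scheme, $\textsf{BasisExt}(\bold F'_{1-b},\bold S_{\bold A^{(\bar i)}})$ yields a basis whose Gram--Schmidt norm equals $\lVert \widetilde{\bold S_{\bold A^{(\bar i)}}} \rVert \le O(\sqrt{n\log q})$ by Lemmas \ref{TrapGen} and \ref{BasisExt}. In the simulation, since $\bold A_{1-b}^{(s)}-\bold A_{C_\textsf{PRF},\boldsymbol{\mu}}^{(s)}=\bold A^{(s)}\bar{\bold R}^{(s)}+(1-2b)\bold G$ with $1-2b\in\{\pm1\}$ nonzero, Lemma \ref{BasisExtABB} applied with the $\bold G$-trapdoor $\bold S_{\bold G}$ of Lemma \ref{GTrap} yields a basis of $\bold{\Lambda}_q^\perp(\bold F_{C_\textsf{PRF},\boldsymbol{\mu},1-b}^{(s)})$ whose Gram--Schmidt norm is at most $(\lVert \bar{\bold R}^{(s)}\rVert+1)\cdot \lVert\widetilde{\bold S_{\bold G}}\rVert$. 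Using Lemma \ref{Lemma7} to bound $\lVert\bold R_{C_\textsf{PRF},\boldsymbol{\mu}}^{(s)}\rVert\le O(\ell^{2c}\cdot m^{3/2})$ and the trivial bound $\lVert\bold R_{1-b}^{(s)}\rVert\le O(\sqrt{m})$, I get an overall bound of $O(\ell^{2c}\cdot m^{3/2})$, and $\textsf{BasisExt}$ preserves this.

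Now I would invoke Lemma \ref{PropertyOfBasisRand}: since the chosen $\sigma=O(\ell^{2c}\cdot m^{3/2})\cdot \omega(\sqrt{\log Nm})$ exceeds both Gram--Schmidt norms by the required $\omega(\sqrt{\log m})$ factor, the outputs of $\textsf{BasisRand}$ seeded by the real trapdoor and by the simulator's trapdoor are statistically indistinguishable. After randomization, Lemma \ref{BasisRand} guarantees that the randomized basis $\bold S_{\bold F'_{1-b}}$ satisfies $\lVert\widetilde{\bold S_{\bold F'_{1-b}}}\rVert\le \sigma\cdot\sqrt{2Nm}$. With $\sigma'=\sqrt{N}\cdot O(\ell^{2c}\cdot m^{2})\cdot(\omega(\sqrt{\log Nm}))^2 \ge \sigma\cdot\sqrt{2Nm}\cdot\omega(\sqrt{\log Nm})$, Lemma \ref{SampleGaussian} ensures that $\textsf{SampleGaussian}(\bold F'_{1-b},\bold S_{\bold F'_{1-b}},\bold 0,\sigma')$ produces a vector whose distribution is statistically close to $\mathcal{D}_{\bold{\Lambda}_q^{\bold 0}(\bold F'_{1-b}),\sigma'}$ in both worlds. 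A union bound over the (polynomially many) signing queries completes the argument.

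The main obstacle will be book-keeping the norm bound on $\bar{\bold R}^{(s)}$ produced by the key-homomorphic evaluation: I must be careful that the bound from Lemma \ref{Lemma7}, which governs the choice of $\sigma$ and in turn $\sigma'$, is tight enough to match the stated parameters while still allowing $\textsf{BasisExtABB}$ and $\textsf{BasisRand}$ to operate correctly. Everything else is assembling the standard trapdoor-swapping template (extend, randomize, sample) and applying the cited lemmas in sequence.
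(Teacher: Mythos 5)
Your proposal is correct and follows essentially the same route as the paper: bound the Gram--Schmidt norm of the simulator's pre-randomization basis via Lemma \ref{Lemma7} and Lemma \ref{BasisExtABB}, push it through \textsf{BasisExt} and \textsf{BasisRand}, and check that $\sigma$ and $\sigma'$ are large enough for Lemmas \ref{BasisRand} and \ref{SampleGaussian} to make the \textsf{SampleGaussian} output statistically close to $\mathcal{D}_{\bold{\Lambda}_q^\perp(\bold F'_{1-b}),\sigma'}$ in both worlds. Your additional explicit invocation of the trapdoor indistinguishability of \textsf{BasisRand} (Lemma \ref{PropertyOfBasisRand}) and the explicit real-world bound $\lVert \widetilde{\bold S_{\bold A^{(\bar i)}}}\rVert \le O(\sqrt{n\log q})$ are not in the paper's proof of this claim (the paper relies only on both outputs being close to the same ideal Gaussian), but they only strengthen the same argument.
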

\begin{proof}\renewcommand{\qedsymbol}{}
By Lemma \ref{SampleGaussian}, for sufficient large Gaussian parameter $\sigma'$, the distribution of the $\bold x'$ generated in the simulation by $\textsf{SampleGaussian}$ is statistically close to the distribution of signatures (i.e., $\mathcal{D}_{\bold{\Lambda}_q^\perp(\bold F'),\sigma}$) generated in the real scheme. So we next analyze how to set the parameter $\sigma'$. In the $\textsf{Simulating Signing Oracle}$ phase, we constructed 
\[
\bold F_{C_\textsf{PRF},\boldsymbol{\mu}, 1-b}^{(i')}=\big[\bold A^{(i')}\big| \bold A^{(i')} \big(\bold R_{1-b}^{(i')}-\bold R_{C_\textsf{PRF},\boldsymbol{\mu}}^{(i')}\big)+(1-2b)\bold G\big]
\]

\noindent Let $\bar{\bold R}^{(i')}=\bold R_{1-b}^{(i')}-\bold R_{C_\textsf{PRF},\boldsymbol{\mu}}^{(i')}$. By Lemma \ref{Lemma7}, we know $\big\lVert\widetilde{ \bar{\bold R}^{(i')}} \big\rVert \leq O(\ell^{2c}\cdot m^{3/2})$ for some constant $c$. By Lemma \ref{BasisExtABB}, we know $\Big\lVert \widetilde{\bold S_{\bold F_{C_\textsf{PRF}}^{(i')}}}_{,\boldsymbol{\mu}, 1-b} \Big\rVert < \big(\big\lVert \bar{\bold R}^{(i')} +1\big\rVert\big) \cdot \big\lVert \widetilde{\bold {S_G}}\big\rVert$. Let $\bold F'_{1-b}=\Big[\bold F_{C_\textsf{PRF},\boldsymbol{\mu},1-b}^{(1)}\big|\dots\big|\bold F_{C_\textsf{PRF},\boldsymbol{\mu},1-b}^{(N)}\Big]\in \mathbb{Z}_q^{n\times 2Nm}$ be the extended basis output by \textsf{BasisExt}. By Lemma \ref{BasisRand}, it requires to set $\sigma>\big\lVert \widetilde{\bold S_{\bold F'_{1-b}}}\big\rVert \cdot \omega(\sqrt{{\rm log}~m})$. Let $\bold{F}''_{1-b}$ be the randomized basis output by \textsf{BasisRand}. By Lemma \ref{BasisRand}, we know $\lVert \widetilde{\bold S_{\bold F''_{1-b}}}\rVert \leq \sigma\sqrt{Nm}$. By Lemma \ref{SampleGaussian}, it requires to set $\sigma'>\lVert \widetilde{\bold S_{\bold F''_{1-b}}}\rVert \cdot \omega(\sqrt{{\rm log}~Nm})$. Therefore, to satisfy these requirements, set $\sigma=O\big(\ell^{2c}\cdot m^{3/2}\big)\cdot \omega\big(\sqrt{{\rm log}~Nm}\big)$ and $\sigma'=\sqrt{N}\cdot O\big(\ell^{2c}\cdot m^{2}\big)\cdot \big(\omega\big(\sqrt{{\rm log}~Nm}\big)\big)^2$ is sufficient.
\end{proof}

\begin{claim}\label{Claim4}
It's hard for $\mathcal{A}$ to find a messages $\boldsymbol{\mu}'$ such that each $\bold A_{C_\textsf{PRF},\boldsymbol{\mu}}^{(i)}=\bold A_{C_\textsf{PRF},\boldsymbol{\mu}'}^{(i)}$ holds.
\end{claim}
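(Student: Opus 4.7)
The plan is to reduce the existence of such a $\boldsymbol{\mu}'$ to a violation of either the {\rm $\textsf{SIS}_{q,\beta}$} assumption on $\bold A^{(i)}$ or to breaking the PRF, thereby contradicting Theorem \ref{UnforgeabilityProof}'s hypotheses. I adopt the key-homomorphic view from the simulation setup (which by Claim \ref{Claim1} is statistically close to the real distribution): each evaluated matrix decomposes as $\bold A_{C_\textsf{PRF},\boldsymbol{\mu}}^{(i)}=\bold A^{(i)}\bold R_{C_\textsf{PRF},\boldsymbol{\mu}}^{(i)}+\textsf{PRF}(\bold k,\boldsymbol{\mu})\cdot \bold G$ with $\lVert \bold R_{C_\textsf{PRF},\boldsymbol{\mu}}^{(i)} \rVert\leq O(\ell^{2c}\cdot m^{3/2})$ by Lemma \ref{Lemma7}.

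Suppose toward contradiction that $\mathcal{A}$ outputs a $\boldsymbol{\mu}'\neq \boldsymbol{\mu}$ whose evaluated matrix coincides with that of $\boldsymbol{\mu}$ for every $i\in[N]$. Fix any index $i$ and subtract the two decompositions to obtain
\begin{equation*}
\bold A^{(i)}\cdot \Delta \bold R^{(i)}\;=\;\Delta b\cdot \bold G \pmod q,
\end{equation*}
where $\Delta \bold R^{(i)}:=\bold R_{C_\textsf{PRF},\boldsymbol{\mu}}^{(i)}-\bold R_{C_\textsf{PRF},\boldsymbol{\mu}'}^{(i)}$ has norm at most $2\cdot O(\ell^{2c}\cdot m^{3/2})$, and $\Delta b:=\textsf{PRF}(\bold k,\boldsymbol{\mu}')-\textsf{PRF}(\bold k,\boldsymbol{\mu})\in\{-1,0,1\}$. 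I then split into two cases.

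\emph{Case 1: $\Delta b\neq 0$.} I multiply both sides on the right by the public gadget basis $\bold{S_G}$ of Lemma \ref{GTrap}, which satisfies $\bold G\bold{S_G}=\bold 0\pmod q$. This yields $\bold A^{(i)}\cdot (\Delta \bold R^{(i)}\bold{S_G})=\bold 0\pmod q$, so the columns of $\Delta \bold R^{(i)}\bold{S_G}$ lie in $\bold \Lambda_q^\perp(\bold A^{(i)})$ and have norm bounded by $\lVert \Delta\bold R^{(i)}\rVert\cdot \lVert \widetilde{\bold{S_G}}\rVert\cdot \sqrt{m}\leq O(\ell^{2c}\cdot m^{2})$, which is well within $\beta$. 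The product cannot vanish identically, since otherwise $\Delta \bold R^{(i)}$ itself would have column space orthogonal (over $\mathbb{Z}_q$) to $\bold \Lambda_q^\perp(\bold G)$, and combined with $\bold A^{(i)}\Delta\bold R^{(i)}=\pm\bold G$ this forces a contradiction by the full row-rank of $\bold G$. Hence at least one column is a nonzero short solution to {\rm $\textsf{SIS}_{q,\beta}$} on $\bold A^{(i)}$.

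\emph{Case 2: $\Delta b=0$.} Here $\bold A^{(i)}\cdot \Delta\bold R^{(i)}=\bold 0\pmod q$ directly, so every nonzero column of $\Delta\bold R^{(i)}$ is already a short SIS solution. The remaining sub-case, in which $\Delta\bold R^{(i)}=\bold 0$ holds while $\boldsymbol{\mu}\neq \boldsymbol{\mu}'$, is what I expect to be the main obstacle: the inputs to $\textsf{Eval}$ change at the positions where $\boldsymbol{\mu}$ and $\boldsymbol{\mu}'$ differ (distinct, independently-sampled $\bold R_{\bold C_{\mu_\ell}^{(i)}}\xleftarrow{\$}\{1,-1\}^{m\times m}$), and I need to rule out a coincidental collision of the deterministic $\textsf{Eval}$ output across all $N$ users. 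I would handle this by first showing via the leftover-hash-style argument of Lemma \ref{LeftOverHash} that for each fixed $i$ the event $\Delta\bold R^{(i)}=\bold 0$ occurs with only negligible probability over the sign choices of the input $\bold R$-matrices; and then, to handle the adversary adaptively selecting $\boldsymbol{\mu}'$, invoking PRF security to argue that replacing $\textsf{PRF}(\bold k,\cdot)$ by a truly random function does not change $\mathcal{A}$'s success probability by more than a negligible amount, reducing the problem to a statistical computation over the random $\bold R$'s. Union-bounding over the (polynomially many) queries made by $\mathcal{A}$ completes the argument.
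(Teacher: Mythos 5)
Your proposal is correct and its core reduction is the same as the paper's: decompose each evaluated matrix as $\bold A_{C_\textsf{PRF},\boldsymbol{\mu}}^{(i)}=\bold A^{(i)}\bold R_{C_\textsf{PRF},\boldsymbol{\mu}}^{(i)}+\textsf{PRF}(\bold k,\boldsymbol{\mu})\bold G$, subtract the two decompositions, annihilate the residual $\bold G$ term with a short element of $\bold{\Lambda}_q^{\perp}(\bold G)$, and read off a short nonzero $\textsf{SIS}$ solution for $\bold A^{(i)}$. The differences are worth noting. First, the paper handles only your Case 1: it invokes PRF security to say the two PRF outputs differ with probability $1/2$ and works entirely in that branch, whereas you also treat $\Delta b=0$ (where $\bold A^{(i)}\Delta\bold R^{(i)}=\bold 0$ gives an SIS solution even more directly) and flag the residual subcase $\Delta b=0,\ \Delta\bold R^{(i)}=\bold 0$; the paper is silent on that subcase, so your sketchy leftover-hash treatment of it is not a gap relative to the paper, though as written it is the least developed part of your argument. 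Second, to kill the $\bold G$ term the paper samples a single short nonzero $\bold e$ with $\bold G\bold e=\bold 0$ (via Lemmas \ref{SampleGaussian} and \ref{GTrap}) and must then argue probabilistically that $\Delta\bold R^{(i)}\bold e\neq\bold 0$; you instead right-multiply by the full public basis $\bold{S_G}$, which makes non-vanishing a deterministic consequence of $\bold{S_G}$ being full rank over the integers together with $\bold A^{(i)}\Delta\bold R^{(i)}=\pm\bold G\neq\bold 0$. Your version of that step is cleaner, at the cost of a slightly larger (but still well within $\beta$) norm bound.
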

\begin{proof}\renewcommand{\qedsymbol}{}
In our construction, note that $\bold F_{C_\textsf{PRF},\boldsymbol{\mu}, 1-b}^{(i')}=\Big[\bold A^{(i')}\big| \bold A_{1-b}^{(i')}-\bold A_{C_\textsf{PRF},\boldsymbol{\mu}}^{(i')}\Big]$, one attacking method is to find a messages $\boldsymbol{\mu}'$ and a $\bar k'$ such that each $\bold A_{C_\textsf{PRF},\boldsymbol{\mu}}^{(i)}=\bold A_{C_\textsf{PRF},\boldsymbol{\mu}'}^{(i)}$ holds. Assume an efficient adversary can do that, with the public parameters constructed above, it holds that $\bold A^{(i')}\bold R_{C_\textsf{PRF},\boldsymbol{\mu}}^{(i')}+\textsf{PRF}\big(\bold k^{(i')}, \boldsymbol{\mu}\big)\bold G=\bold A^{(i')}\bold R_{C_\textsf{PRF},\boldsymbol{\mu}'}^{(i')} +\textsf{PRF}\big(\bold k^{(i')}, \boldsymbol{\mu}'\big)\bold G$.
Assume the based {PRF} is secure, with $1/2$ probability that $\textsf{PRF}\big(\bold k^{(i')}, \boldsymbol{\mu}\big) \neq \textsf{PRF}\big(\bold k^{(i')}, \boldsymbol{\mu}'\big)$ holds. In this case, we have $\bold R_{C_\textsf{PRF},\boldsymbol{\mu}}^{(i')}\neq \bold R_{C_\textsf{PRF},\boldsymbol{\mu}'}^{(i')}$ and $\bold A^{(i')}\big(\bold R_{C_\textsf{PRF},\boldsymbol{\mu}}^{(i')}-\bold R_{C_\textsf{PRF},\boldsymbol{\mu}'}^{(i')}\big)\pm \bold G=\bold 0 \pmod q$ holds. By Lemma \ref{SampleGaussian} and \ref{GTrap}, a low-norm vector $\bold e \in\mathbb{Z}^{m}$ can be efficiently found such that $\bold{Ge}=\bold 0\pmod q$ where $\bold e \neq \bold 0$ and $\lVert \bold e\rVert \leq \sigma_{\bold G} \sqrt{m}$ for some parameter $\sigma_{\bold G}\geq \sqrt{5}\cdot \omega(\sqrt{{\rm log}~m})$. Then multiply $\bold e$ to the both sides of the above equation, we have $\bold A^{(i')}(\bold R_{C_\textsf{PRF},\boldsymbol{\mu}}^{(i')}-\bold R_{C_\textsf{PRF},\boldsymbol{\mu}'}^{(i')})\bold e=\bold 0 \pmod q$ holds, which means the $(\bold R_{C_\textsf{PRF},\boldsymbol{\mu}}^{(i')}-\bold R_{C_\textsf{PRF},\boldsymbol{\mu}'}^{(i')})\bold e$ is a non-zero vector with all but negligible probability and, therefore, a valid the SIS solution for $\bold A^{(i')}$.
\end{proof}

\begin{claim}\label{Claim4.1}
It's hard for $\mathcal{A}$ to forge a signature by adversarially choosing keys.
\end{claim}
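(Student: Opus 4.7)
The plan is to reduce any forgery in the adversarially-chosen-key game to the \textsf{SIS} problem by leveraging the simulator $\mathcal{S}$ already built in the main proof, augmented by Claims \ref{Claim1}--\ref{Claim4}. The key observation is that although $\mathcal{A}$ may inject arbitrary verification keys into rings passed to \textsf{OSign}, the winning condition forces $\textsf{R}^{*}\subseteq \textsf{S}$, so the forged ring lies entirely in the honest set where $\mathcal{S}$ has embedded the \textsf{SIS} challenge $\bold A=[\bold A^{(1)}|\cdots|\bold A^{(N)}]$. First I would verify that the simulation still answers adversarial-ring signing queries faithfully: for a query $(\boldsymbol{\mu},\textsf{R},s)$ with $\textsf{vk}^{(s)}\in\textsf{S}$ but $\textsf{R}\not\subseteq \textsf{S}$, the signer's block contributes a $(1-2b)\bold G$ term inside $\bold F_{C_\textsf{PRF},\boldsymbol{\mu},1-b}^{(s)}$, so \textsf{BasisExtABB} produces a short basis for this block from $\bold S_{\bold G}$, and \textsf{BasisExt} extends it across the rest of $\bold F'_{1-b}$ regardless of whether neighboring blocks were generated honestly or adversarially. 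By Claim \ref{Claim2}, the resulting signature is statistically close to a real one.

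Next I would rule out the PS19-style linear-combination attack sketched in Section \ref{Approach}. In that attack the forger exploits the linearity of the ring equation in the member matrices; in our scheme the member matrix entering the equation is the output of the nonlinear Boolean evaluation $\bold A_{C_\textsf{PRF},\boldsymbol{\mu}}^{(i)}=\textsf{Eval}(C_\textsf{PRF},(\{\bold B_j^{(i)}\},\bold C_{\mu_1}^{(i)},\ldots,\bold C_{\mu_t}^{(i)}))$. Scaling or additively combining the $\bold B_j^{(i)},\bold C_b^{(i)}$ of an adversarial slot does not induce any predictable algebraic relation on $\bold A_{C_\textsf{PRF},\boldsymbol{\mu}}^{(i)}$, because \textsf{NAND} gates do not commute with such transformations. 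Moreover, by Claim \ref{Claim4}, $\mathcal{A}$ cannot engineer a message collision $\bold A_{C_\textsf{PRF},\boldsymbol{\mu}}^{(i)}=\bold A_{C_\textsf{PRF},\boldsymbol{\mu}'}^{(i)}$ to re-use a queried signature under a different message. Hence any winning forgery must be produced ``on its own merits'', not as a recombination of oracle outputs.

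I would then carry out the extraction as in the main proof body: given $\Sigma^{*}=\bold x'^{*}$ with $\bold F'_{b^{*}}\cdot\bold x'^{*}=\bold 0\pmod q$, unfold $\bold F'_{b^{*}}$ using $\bold A_{b^{*}}^{(i^{*})}-\bold A_{C_\textsf{PRF},\boldsymbol{\mu}^{*}}^{(i^{*})}=\bold A^{(i^{*})}\bigl(\bold R_{\bold A_{b^{*}}^{(i^{*})}}-\bold R_{C_\textsf{PRF},\boldsymbol{\mu}^{*}}^{(i^{*})}\bigr)+(b^{*}-\textsf{PRF}(\bold k^{(i^{*})},\boldsymbol{\mu}^{*}))\bold G$, where the $\bold G$ coefficient vanishes because $b^{*}$ must match the PRF value on the signer-side equation (otherwise we could recycle Claim \ref{Claim4}'s argument to obtain an \textsf{SIS} solution for a single $\bold A^{(i^{*})}$). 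Collecting terms and padding with zeros at indices outside $\textsf{R}^{*}$ yields a candidate $\hat{\bold x}^{*}$ with $\bold A\hat{\bold x}^{*}=\bold 0\pmod q$ and $\|\hat{\bold x}^{*}\|\le \beta$ by the norm bound on $\bold x'^{*}$ from verification and Lemma \ref{Lemma7}.

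The main obstacle I expect is proving $\hat{\bold x}^{*}\neq\bold 0$, because the transformation $\bar{\bold x}^{*(i^{*})}={\bold x}_1^{*(i^{*})}+\bigl(\bold R_{\bold A_{b^{*}}}^{(i^{*})}-\bold R_{C_\textsf{PRF},\boldsymbol{\mu}^{*}}^{(i^{*})}\bigr){\bold x}_2^{*(i^{*})}$ could in principle collapse to zero. I would handle this with a standard min-entropy argument in the vein of \cite{ABB10a,BL16}: by Lemma \ref{LeftOverHash} the matrices $\bold A_{b}^{(i)},\bold B_j^{(i)},\bold C_b^{(i)}$ statistically hide all of $\mathcal{A}$'s view of the secret $\bold R$-matrices beyond what is leaked through these public matrices, so conditioned on $\mathcal{A}$'s view there remains sufficient entropy in $\bold R_{\bold A_{b^{*}}}^{(i^{*})}-\bold R_{C_\textsf{PRF},\boldsymbol{\mu}^{*}}^{(i^{*})}$ that $\bar{\bold x}^{*(i^{*})}=\bold 0$ for every honest position happens only with negligible probability, given $(\boldsymbol{\mu}^{*},\textsf{R}^{*},\Sigma^{*})\notin\textsf{L}$. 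Combining all the above, $\mathcal{S}$ outputs a valid \textsf{SIS}$_{q,\beta}$ solution with probability polynomially related to $\mathcal{A}$'s forgery advantage, completing the claim.
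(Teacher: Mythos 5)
Your proposal reframes the claim as one piece of the global \textsf{SIS} reduction, and two of your observations are sound and consistent with what the paper does elsewhere: (i) the winning condition $\textsf{R}^{*}\subseteq\textsf{S}$ confines adversarially-chosen keys to the signing queries, and (ii) the simulator can still answer a query whose ring contains adversarial keys, because the short basis comes from the signer's own block (the $(1-2b)\bold G$ term plus \textsf{BasisExtABB}/\textsf{BasisExt}) independently of the other blocks. The extraction and the non-zero argument you sketch are essentially the content of the ``Exploiting the forgery'' paragraph and Claim \ref{Claim5}, so they are not what this claim is for.

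The gap is in the one step Claim \ref{Claim4.1} actually exists to handle. The paper's proof is a reduction for the PS19-style attack described in Sect. \ref{Approach}: if $\mathcal{A}$ could choose keys so that $\bold F_{C_\textsf{PRF},\boldsymbol{\mu},1-b}^{(1)}=c_1\bold F_{C_\textsf{PRF},\boldsymbol{\mu},1-b}^{(3)}$, i.e.\ a proportionality relation $\bold A_{C_\textsf{PRF},\boldsymbol{\mu}}^{(i)}=c\,\bold A_{C_\textsf{PRF},\boldsymbol{\mu}'}^{(i)}$ between the evaluated matrices of an honest slot and an adversarial slot, then exactly as in Claim \ref{Claim4} one obtains an equation of the form $\bold A^{(i')}\big(\bold R_{C_\textsf{PRF},\boldsymbol{\mu}}^{(i')}-\bold R_{C_\textsf{PRF},\boldsymbol{\mu}'}^{(i')}\big)\pm(1-c)\bold G=\bold 0\pmod q$, and multiplying by a short kernel vector of $\bold G$ yields an \textsf{SIS} solution for $\bold A^{(i')}$. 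You instead dismiss this attack by asserting that ``\textsf{NAND} gates do not commute with such transformations.'' That is a heuristic, not an argument: the adversary controls $\{\bold B_{j}^{(i)}\}$ and $\bold C_{b}^{(i)}$ for its own slots entirely and could in principle choose them (for instance by perturbing copies of the honest keys) so that \textsf{Eval} produces a correlated output; your appeal to Claim \ref{Claim4} covers only message collisions, not key collisions on the same message. What is needed, and what the paper supplies, is a proof that finding and exploiting any such key-level correlation is itself as hard as \textsf{SIS}. Without that step your proposal does not establish the claim; with it, your global framing and the paper's attack-specific reduction become two presentations of the same proof.
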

\begin{proof}\renewcommand{\qedsymbol}{}
Note that we allowed the adversary $\mathcal{A}$ has the ability to adversarially choosing keys, one attacking method to exploit that is $\mathcal{A}$ can adversarially provides a ring $\textsf{R}=(\textsf{vk}^1, \textsf{vk}^2, \textsf{vk}^3, \textsf{vk}^4)$, suppose only the $(\textsf{vk}^1, \textsf{vk}^2)$ are honest generated which in the verification keys set $\textsf{S}$. $\mathcal{A}$ will successfully forge a signature by querying the singing oracle if $\bold F_{C_\textsf{PRF},\boldsymbol{\mu}, 1-b}^{(1)} = c_1\bold F_{C_\textsf{PRF},\boldsymbol{\mu}, 1-b}^{(3)}$ and $\bold F_{C_\textsf{PRF},\boldsymbol{\mu}, 1-b}^{(2)} = c_2\bold F_{C_\textsf{PRF},\boldsymbol{\mu}, 1-b}^{(4)}$ holds, where $c_1, c_2$ are some constants. It means that $\mathcal{A}$ found the ring $\textsf{R}$ such that each $\bold A_{C_\textsf{PRF},\boldsymbol{\mu}}^{(i)}=c\bold A_{C_\textsf{PRF},\boldsymbol{\mu}'}^{(i)}$ holds for some constant $c$. Assume the based PRF is secure, with $1/2$ probability that $1=\textsf{PRF}(\bold k^{(i')}, \boldsymbol{\mu}) = \textsf{PRF}(\bold k'^{(i')}, \boldsymbol{\mu})$ holds. In this case, $\bold R_{C_\textsf{PRF},\boldsymbol{\mu}}^{(i')}\neq {{\bold R}'}_{C_\textsf{PRF},\boldsymbol{\mu}}^{(i')}$ and $\bold A^{(i')}(\bold R_{C_\textsf{PRF},\boldsymbol{\mu}}^{(i')}-\bold R_{C_\textsf{PRF},\boldsymbol{\mu}'}^{(i')})\pm(1-c) \bold G=\bold 0 \pmod q$ holds. Then we also can giving a reduction from the {\rm $\textsf{SIS}_{q,\beta}$} assumption as same as in Claim \ref{Claim4}.
\end{proof}

\begin{claim}\label{Claim5}
$\mathcal{A}$ can produce a valid {\rm $\textsf{SIS}_{q,\beta}$} solution with overwhelming probability.
\end{claim}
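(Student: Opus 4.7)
The plan is to establish two things about the vector $\hat{\bold x}^*$ that $\mathcal{S}$ extracts in the \textbf{Exploiting the forgery} phase: that it has norm at most $\beta$, and that it is nonzero with overwhelming probability. For the norm bound, $\mathcal{S}$ aborts unless $\|\bold {x'}^*\| \leq \sigma\sqrt{2N^*m}$, so each sub-block satisfies $\|{\bold x_1^*}^{(i^*)}\|, \|{\bold x_2^*}^{(i^*)}\| \leq \sigma\sqrt{2N^*m}$. By Lemma~\ref{Lemma7}, the evaluated matrix satisfies $\|\bold R_{C_\textsf{PRF},\boldsymbol{\mu}^*}^{(i^*)}\| \leq O(\ell^{2c}\cdot m^{3/2})$, and the simulated $\bold R_{\bold A_{b^*}}^{(i^*)}$ is a $\pm 1$ matrix of smaller norm order. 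Applying the triangle inequality inside each block and summing over $i^*\in[N^*]$ then bounds $\|\hat{\bold x}^*\|$ by $N\cdot O(\ell^{2c} m^{3/2})\cdot \sigma\sqrt{2m}$, matching the parameter $\beta$ fixed in Sect.~\ref{Correctness}.

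The remaining and more delicate step is to show $\hat{\bold x}^* \neq \bold 0$. Since $\bold{x'}^*$ is a valid forgery it is nonzero, so some block $i^*$ satisfies ${\bold x_1^*}^{(i^*)}\neq\bold 0$ or ${\bold x_2^*}^{(i^*)}\neq\bold 0$. If such a block has ${\bold x_2^*}^{(i^*)}=\bold 0$, then $\bar{\bold x}^{*(i^*)} = {\bold x_1^*}^{(i^*)} \neq \bold 0$ and we are done. Otherwise, pick a block with ${\bold x_2^*}^{(i^*)}\neq \bold 0$; the task is to argue that
\[
\bar{\bold x}^{*(i^*)} \;=\; {\bold x_1^*}^{(i^*)} + \bigl(\bold R_{\bold A_{b^*}}^{(i^*)} - \bold R_{C_\textsf{PRF},\boldsymbol{\mu}^*}^{(i^*)}\bigr)\,{\bold x_2^*}^{(i^*)} \;\neq\; \bold 0
\]
with probability $1-\mathrm{negl}(n)$ over the simulator's internal coins.

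The hard part is this non-triviality argument, and it rests squarely on Lemma~\ref{LeftOverHash}. The adversary's view only exposes the products $\bold A^{(i^*)}\bold R_{\bold A_{b^*}}^{(i^*)}$, together with the analogous products hidden inside $\bold R_{C_\textsf{PRF},\boldsymbol{\mu}^*}^{(i^*)}$, and these products are statistically close to uniform and independent of the underlying $\bold R$'s. Consequently, conditioned on the whole view of $\mathcal{A}$ (including the forgery $\bold{x'}^*$), the matrix $\bold R_{\bold A_{b^*}}^{(i^*)}$ retains super-logarithmic min-entropy in each column. For any fixed nonzero ${\bold x_2^*}^{(i^*)}$ and fixed target $\bold v = -{\bold x_1^*}^{(i^*)} + \bold R_{C_\textsf{PRF},\boldsymbol{\mu}^*}^{(i^*)}{\bold x_2^*}^{(i^*)}$, the event $\bold R_{\bold A_{b^*}}^{(i^*)}{\bold x_2^*}^{(i^*)} = \bold v$ forces $\bold R_{\bold A_{b^*}}^{(i^*)}$ into a single coset of a proper subspace in every row, which occurs with probability at most $2^{-\Omega(m)}$. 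A union bound over the at most $N$ candidate blocks, combined with Claims~\ref{Claim1}--\ref{Claim4.1} (which already rule out the trivial collision strategies on the $\bold A_{C_\textsf{PRF},\boldsymbol{\mu}^*}^{(i^*)}$'s), yields $\hat{\bold x}^*\neq \bold 0$ except with negligible probability. Since $\bold A\hat{\bold x}^* = \bold 0 \pmod q$ holds by construction, $\hat{\bold x}^*$ is a valid $\textsf{SIS}_{q,\beta}$ solution, completing the claim.
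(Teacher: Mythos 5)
Your proposal is correct and follows essentially the same two-step structure as the paper's proof: it bounds $\lVert \hat{\bold x}^*\rVert$ using Lemma~\ref{Lemma7} for $\bold R_{C_\textsf{PRF},\boldsymbol{\mu}^*}^{(i^*)}$ and the smallness of the $\pm 1$ matrices, and it argues non-vanishing from the residual entropy of $\bold R_{\bold A_{b^*}}^{(i^*)}$ conditioned on the adversary's view. The only cosmetic differences are that you derive the sub-block norm bound from the verification check rather than the Gaussian tail bound (arguably the more careful choice for an adversarial forgery), and you invoke Lemma~\ref{LeftOverHash} explicitly where the paper phrases the same entropy argument via the pigeonhole principle.
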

\begin{proof}\renewcommand{\qedsymbol}{}
We argue that ${\hat {\bold x}^*}$ that $\mathcal{S}$ finally output in the simulation is a valid $\textsf{SIS}_{q,\beta}$ solution in two steps. We first explain ${\hat {\bold x}^*}$ is sufficiently short, note that ${\hat {\bold x}^*}$ is consisted by at most $N$ components ${\bar {\bold x}^*}{^{(i^*)}}=\bigg({\bar {\bold x}_1^*}{^{(i^*)}}+\Big(\bold R_{\bold A_{b^*}}^{(i^*)}-\bold R_{C_\textsf{PRF},\boldsymbol{\mu}^{*}}^{(i^*)}\Big)\cdot {\bar {\bold x}_2^*}{^{(i^*)}}\bigg)$ where ${\bar {\bold x}_1^*}{^{(i^*)}}$ and ${\bar {\bold x}_2^*}{^{(i^*)}}$ follow the distribution $\mathcal{D}_{\mathbb{Z}^{m},\sigma}$. By Lemma \ref{Lemma1}, $\lVert{\bar {\bold x}_1^*}{^{(i^*)}}\rVert,\lVert{\bar {\bold x}_2^*}{^{(i^*)}}\rVert\leq \sigma\sqrt{m}$. By Lemma \ref{Lemma7}, we know the norm bound on $\big\lVert{\bar{\bold R}^{(i^*)}}\big\rVert=\Big\lVert\Big(\bold R_{\bold A_{b^*}}^{(i^*)}-\bold R_{C_\textsf{PRF},\boldsymbol{\mu^*}}^{(i^*)}\Big)\Big\rVert\leq O(\ell^{2c}\cdot m^{3/2})$. Therefore, it requires to set $\beta \geq N\cdot O(\ell^{2c}\cdot m^{3/2})\cdot \sigma\sqrt{2m}$. 

Then we prove ${\hat {\bold x}^*}$ is a non-zero with overwhelming probability. Suppose that the $\big\{{\bar {\bold x}_2^*}{^{(i^*)}}\big\}_{i^*\in[N^*]}=\bold 0$, then for a valid forgery we must have at least one ${\bar {\bold x}_1^*}{^{(i^*)}}\neq\bold 0$ in $\big\{{\bar {\bold x}_1^*}{^{(i^*)}}\big\}_{i^*\in[N^*]}$ and thus ${\hat {\bold x}^*}$ is non-zero. Suppose on the contrary, there exists one ${\bar {\bold x}_2^*}{^{(i^*)}}\neq\bold 0$ in $\big\{{\bar {\bold x}_2^*}{^{(i^*)}}\big\}_{i^*\in[N^*]}$, then we need to argue that the corresponding ${\bar {\bold x}^*}{^{(i^*)}}={\bar {\bold x}_1^*}{^{(i^*)}}+\bar{\bold R}^{(i^*)} \cdot{\bar {\bold x}_2^*}{^{(i^*)}}$ is non-zero with overwhelming probability. Due to we assume ${\bar {\bold x}_2^*}{^{(i^*)}}=\big(x_1,\dots,x_m\big)\neq\bold 0$ which means at least one coordinate of ${\bar {\bold x}_2^*}{^{(i^*)}}$, denote as $x_{o}$ where $o\in[m]$, such that $x_{o}\neq 0$. We write $\bar{\bold R}^{(i^*)}=(\bold r_1|\dots|\bold r_m)$ and so $\bar{\bold R}^{(i^*)}\cdot {\bar {\bold x}_2^*}{^{(i^*)}}=\bold r_o x_o+\sum_{\bar o\in[m]\setminus o}\bold r_{\bar o} x_{\bar o}$. Note that for the fixed message $\mu^*$ on which $\mathcal{A}$ made the forgery, $\bar{\bold R}^{(i^*)}$ (therefore $\bold r_{o}$) depends on the low-norm matrices $(\bold R_{\bold A_{0}^{(i^*)}}, \bold R_{\bold A_{1}^{(i^*)}}), \{\bold R_{\bold B_{j}^{(i^*)}}\}_{j\in[k]}, (\bold R_{\bold C_{0}^{(i^*)}}, \bold R_{\bold C_{1}^{(i^*)}})$ and {PRF} key $\bold k^*$. The information about $x_o$ for $\mathcal{A}$ is from the public matrices in the verification set $\textsf{S}$ that given to the $\mathcal{A}$, note that the {PRF} keys $\bold k^*$ which is not included in $\textsf{S}$. So by the pigeonhole principle there is a (exponentially) large freedom to pick a value to $\bold r_{o}$ which is compatible with $\mathcal{A}$'s view. This completes the proof. 
\end{proof}

\subsection{Anonymity}\label{Anonymity}

We now prove the anonymity of $\textsf{RS}$. 

\begin{theorem}[Anonymity]\label{AnonymityProof}
Let $n$ be a security parameter. The parameters $q,m,\sigma,\sigma',\beta$ are chosen as the Sect. \ref{Correctness}. If the {\rm $\textsf{Trapdoor Indistinguishability}$} property of {\rm $\textsf{BasisRand}$} and {\rm $\textsf{Randomness Indistinguishability}$} property of {\rm $\textsf{ExplainGaussian}$} holds, the {\rm $\textsf{RS}$} scheme is signer-unclaimability.
\end{theorem}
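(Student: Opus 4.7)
\noindent\textit{Proof Proposal.} My plan is a short hybrid argument between the $b=0$ and $b=1$ branches of the signer-unclaimability experiment, driven entirely by the two indistinguishability properties assumed in the theorem: Trapdoor Indistinguishability of \textsf{BasisRand} (Lemma \ref{PropertyOfBasisRand}) and Randomness Indistinguishability of \textsf{ExplainGaussian} (Lemma \ref{PropertyOfExplainGaussian}). Fix the challenge $(\boldsymbol{\mu}^*, \textsf{R}^*, s_0^*, s_1^*)$. Since both $\textsf{vk}^{(s_0^*)}, \textsf{vk}^{(s_1^*)} \in \textsf{R}^*$, each of the blocks $\bold A^{(s_0^*)}, \bold A^{(s_1^*)}$ appears inside the concatenated matrix $\bold F'_{1-b}$, and therefore either local trapdoor $\bold S_{\bold A^{(s_j^*)}}$ can be lifted via \textsf{BasisExt} (Lemma \ref{BasisExt}) to a basis of $\bold{\Lambda}_q^\perp(\bold F'_{1-b})$ whose Gram--Schmidt norm is small enough that \textsf{BasisRand}, \textsf{SampleGaussian}, and \textsf{ExplainGaussian} are all applicable under the $\sigma, \sigma'$ fixed in Section \ref{Correctness}.

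I would then introduce three hybrids. Hybrid $H_0$ is the real game with $b=0$: the challenger extends $\bold S_{\bold A^{(s_0^*)}}$, randomizes via \textsf{BasisRand}, invokes \textsf{SampleGaussian} with honest coins $\gamma_{\textsf{sign}}$ to produce $\Sigma^*$, and returns $(\Sigma^*, \gamma_{\textsf{sign}})$. In $H_1$ the challenger computes $\Sigma^*$ exactly as in $H_0$ but replaces the returned coins by $\gamma'_{\textsf{sign}} \leftarrow \textsf{ExplainGaussian}(\bold F'_{1-b}, \bold S^{(1)}, \Sigma^*, \sigma', \bold 0)$, where $\bold S^{(1)}$ is a freshly randomized extended basis obtained from $\bold S_{\bold A^{(s_1^*)}}$; Lemma \ref{PropertyOfExplainGaussian} makes this substitution statistically invisible, since both $\gamma_{\textsf{sign}}$ and $\gamma'_{\textsf{sign}}$ are distributed as random coins that reproduce $\Sigma^*$ under \textsf{SampleGaussian}. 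In $H_2$ the challenger instead extends and randomizes $\bold S_{\bold A^{(s_1^*)}}$ from the outset, samples with honest coins, and returns them directly; by Lemma \ref{PropertyOfBasisRand} the randomized basis handed to \textsf{SampleGaussian} is statistically independent of which seed trapdoor was used, so the joint distribution $(\Sigma^*, \gamma_{\textsf{sign}})$ is statistically close to that of $H_1$. Since $H_2$ is syntactically the real game with $b=1$, chaining the three statistical closures bounds the adversary's distinguishing advantage by a negligible function.

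\noindent\textbf{Main obstacle.} The delicate step is the seam between $H_1$ and $H_2$, where the basis feeding \textsf{SampleGaussian} and the explicit randomness coin returned to the adversary are being swapped in tandem. To apply the two lemmas cleanly one has to verify (i) that both extended bases satisfy $\sigma \geq \|\widetilde{\bold S}\| \cdot \omega(\sqrt{\log Nm})$, which follows from the norm bound of \textsf{BasisExt} combined with the parameter choice in Section \ref{Correctness}; (ii) that $\Sigma^*$ produced via $s_0^*$'s lifted trapdoor is a legitimate preimage for \textsf{ExplainGaussian} under $s_1^*$'s lifted trapdoor, which holds because both lifts span the same lattice $\bold{\Lambda}_q^\perp(\bold F'_{1-b})$; and (iii) that $\sigma'$ dominates the Gram--Schmidt norms of both randomized bases, so that both \textsf{SampleGaussian} and \textsf{ExplainGaussian} operate inside their valid parameter regime. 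A further subtle point is that the PRF bit $b = \textsf{PRF}(\bold k^{(\bar\imath)}, \boldsymbol{\mu}^*)$ is signer-dependent, so one should additionally argue that throughout the chain the same $b$ governs the lattice $\bold F'_{1-b}$ that both trapdoor lifts target; because \textsf{BasisExt} depends only on which column block is trapdoored and not on any PRF-consistency condition, the argument goes through once the three parameter checks are in place.
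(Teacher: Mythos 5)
Your proposal takes essentially the same route as the paper: the paper also walks from the $b=0$ experiment to the $b=1$ experiment using exactly the two lemmas you invoke, extending either $\bold S_{\bold A^{(s_0^*)}}$ or $\bold S_{\bold A^{(s_1^*)}}$ to a basis of the same lattice $\bold{\Lambda}_q^\perp(\bold F'_{1-b^*})$, appealing to Trapdoor Indistinguishability of \textsf{BasisRand} to erase which seed trapdoor was used, and to Randomness Indistinguishability of \textsf{ExplainGaussian} to make the returned coins consistent; your three-hybrid decomposition is just a slightly more explicit articulation of the paper's single reduction. One remark: the PRF-bit issue you raise at the end is real, but your dismissal of it does not resolve it --- if $\textsf{PRF}(\bold k^{(s_0^*)},\boldsymbol{\mu}^*)\neq\textsf{PRF}(\bold k^{(s_1^*)},\boldsymbol{\mu}^*)$ the two signers target \emph{different} lattices $\bold{\Lambda}_q^\perp(\bold F'_{1-b})$, and the adversary, who holds all key-generation randomness, can evaluate both PRFs and test which ring equation the signature satisfies; the paper does not resolve this either, simply declaring $b^*=b_0^*=b_1^*$ in its reduction, so on this point your attempt matches the paper's own (incomplete) treatment rather than falling short of it.
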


\begin{proof}\renewcommand{\qedsymbol}{}
The proof proceeds in two experiments $\textsf{E}_0$, $\textsf{E}_1$ such that $\textsf{E}_0$ (resp., $\textsf{E}_1$) corresponds to the experiment of \textsf{Anonymity} in Definition \ref{RS} with $b = 0$ (resp., $b = 1$), and such that each experiment is statistically indistinguishable from the one before it. This implies that $\mathcal{A}$ has negligible advantage in distinguishing $\textsf{E}_0$ from $\textsf{E}_1$, as desired.

\noindent$\textsf{E}_0:$ This experiment firstly generate $\textsf{PP}\leftarrow\textsf{Setup}(1^n;\gamma_{\textsf{st}})$, and $\{\textsf{vk}^{(i)},\textsf{sk}^{(i)}\}_{i\in[N]}$ by repeatedly invoking $\textsf{KeyGen}(\gamma_{\textsf{kg}}^{(i)})$, and $\mathcal{A}$ is given $(\textsf{PP}, \textsf{S}=\{\textsf{vk}^{(i)}\}_{i\in[N]})$ and the randomness $(\gamma_{\textsf{st}}, \{\gamma_{\textsf{kg}}^{(i)}\}_{i\in[N]})$. Then $\mathcal{A}$ outputs a tuple $(\boldsymbol{\mu}^*, \textsf{R}^*, \textsf{vk}_0^{*}, \textsf{vk}_1^{*})$ where $\textsf{vk}_0^{*}, \textsf{vk}_1^{*}\in\textsf{S}\bigcap \textsf{R}^*$. Finally, $\mathcal{A}$ is given $\Sigma^*$ and $\gamma_{\textsf{sign}}$ that computed by algorithms {\rm $\textsf{Sign}$} and {\rm $\textsf{ExplainGaussian}$} with $\textsf{sk}_0^{*}$, respectively.

\noindent$\textsf{E}_1:$ This experiment is the same as experiment $\textsf{E}_0$ except that the $\big(\Sigma^*, \gamma_{\textsf{sign}}\big)$ given to $\mathcal{A}$ computed by $\textsf{sk}_1^{*}$.

It remains to show that $\textsf{E}_0$ and $\textsf{E}_1$ are statistically indistinguishable for $\mathcal{A}$, which we do by giving a reduction from the \textsf{Trapdoor Indistinguishability} property of \textsf{BasisRand} and {\rm $\textsf{Randomness Indistinguishability}$} property of {\rm $\textsf{ExplainGaussian}$}.

\noindent\textbf{Reduction.} Suppose $\mathcal{A}$ has non-negligible advantage in distinguishing $\textsf{E}_0$ and $\textsf{E}_1$. We use $\mathcal{A}$ to construct an algorithm $\mathcal{S}$ for the \textsf{Trapdoor Indistinguishability} property of \textsf{BasisRand}.

\noindent$\textbf{Simulating Setup Phase}$. $\mathcal{S}$ generates $(\textsf{PP}, \textsf{S}=\{\textsf{vk}^{(i)}\}_{i\in[N]})$ exactly as in experiments $\textsf{E}_0$ and $\textsf{E}_1$, and gives $(\textsf{PP}, \textsf{S}=\{\textsf{vk}^{(i)}\}_{i\in[N]})$ and the appropriate associated randomness $(\gamma_{\textsf{st}}, \{\gamma_{\textsf{kg}}^{(i)}\}_{i\in[N]})$ to $\mathcal{A}$.

\noindent$\textbf{Challenge}$. $\mathcal{A}$ provides a challenge $(\boldsymbol{\mu}^*, \textsf{R}^*, s_0^*,s_1^*)$ to $\mathcal{S}$. $\mathcal{S}$ does the following to response the challenge:
\begin{itemize}
	\item Parse the message $\boldsymbol{\mu}^*=(\mu_1^*,\dots,\mu_t^*)$, ring $\textsf{R}^*=({\textsf{vk}^*}^{(1)},\dots,{\textsf{vk}^*}^{(N)})$, parse each $\textsf{vk}^{(i^*)}=\big(\bold A^{(i^*)},(\bold A_{0}^{(i^*)},\bold A_{1}^{(i^*)}), \{\bold B_{j}^{(i^*)}\}_{j\in[k]},(\bold C_{0}^{(i^*)},\bold C_{1}^{(i^*)})\big)$.  
	\item Let $N^*=|\textsf{R}^*|$. $\mathcal{S}$ checks if $s_0^*,s_1^*\in[N^*]$, $s_0^*\neq s_1^*$ and {\rm $\textsf{vk}^{(s_0^*)},\textsf{vk}^{(s_1^*)} \in \textsf{S}\cap \textsf{R}^*$}, otherwise $\mathcal{S}$ aborts the simulation. 

	\item Let $b^*=b_0^*=b_1^*$. For $i^*=1$ to $N^*$, compute $\bold A_{C_\textsf{PRF},\boldsymbol{\mu}^*}^{{(i^*)}}=\textsf{Eval}\big(C_\textsf{PRF},\big(\{\bold B_{j}^{{(i^*)}}\}_{j\in[k]},\bold C_{\mu_1^*}^{{(i^*)}},$ $\dots,\bold C_{\mu_t^*}^{{(i^*)}}\big)\big)$ and set the matrix \[\bold F_{C_\textsf{PRF},\boldsymbol{\mu}^*,1-b^*}^{{(i^*)}}=\big[\bold A^{(i^*)} | \bold A_{1-b^*}^{(i^*)}-\bold A_{C_\textsf{PRF},\boldsymbol{\mu}^*}^{{(i^*)}}\big]\]
	
	\item Let ${\bold{F}^*_{1-b^*}}=\big[\bold F_{C_\textsf{PRF},\boldsymbol{\mu}^*,1-b^*}^{(1)} |\dots |\bold F_{C_\textsf{PRF},\boldsymbol{\mu}^*,1-b^*}^{(N^*)}\big]$. Compute $\bold{S}_{{\bold{F}^*_{1-b^*}}}^{(s_0^*)}\leftarrow\textsf{BasisExt}\big(\bold{S}_{\bold A^{(s_{0}^*)}}, \textsf{R}^*\big)$ and $\bold{S}_{{\bold{F}^*_{1-b^*}}}^{(s_1^*)}\leftarrow\textsf{BasisExt}\big(\bold{S}_{\bold A^{(\bar i_{1}^*)}}, \textsf{R}^*\big)$.

	\item Send $\bold{S}_{{\bold{F}^*_{1-b^*}}}^{(s_0^*)}$ and $\bold{S}_{{\bold{F}^*_{1-b^*}}}^{(s_1^*)}$ to the challenger $\mathcal{C}$. Then $\mathcal{C}$ chooses a random bit $b\xleftarrow{\$}\{0,1\}$, responses $\bold{S}'_{{\bold{F}^*_{1-b^*}}}\leftarrow\textsf{BasisRand}\big(\bold{S}_{{\bold{F}^*_{1-b^*}}}^{(s_b^*)}, \sigma\big)$.
	
	\item Compute $\gamma_{\textsf{sign}}\leftarrow${\rm $\textsf{ExplainGaussian}({\bold{F}^*_{1-b^*}},\bold{S}'_{{\bold{F}^*_{1-b^*}}}, \bold x',$ $\sigma', \bold 0)$}, $\bold x'\leftarrow \textsf{SampleGaussian}({\bold{F}^*_{1-b^*}},\bold{S}'_{{\bold{F}^*_{1-b^*}}},\bold 0,\sigma')$ such that $\bold{F}^*_{1-b^*}\cdot \bold x'= \bold 0 \pmod q$.
	
	\item Response $(\bold x', \gamma_{\textsf{sign}})$ to $\mathcal{A}$.
	\end{itemize}
\noindent$\textbf{Guess}$. When $\mathcal{A}$ outputs the guess $b'$, $\mathcal{S}$ outputs the guess $b'$.
	
Note that if the random bit $b$ that challenger selected s.t. $b=0$ then the view of $\mathcal{A}$ is distributed exactly according to experiment $\textsf{E}_0$, while if the random bit $b$ that challenger selected s.t. $b=1$ then the view of $\mathcal{A}$ is distributed exactly according to experiment $\textsf{E}_1$. By the \textsf{Trapdoor Indistinguishability} property of \textsf{BasisRand} (Lemma \ref{PropertyOfBasisRand}) and {\rm $\textsf{Randomness Indistinguishability}$} property of {\rm $\textsf{ExplainGaussian}$} (Lemma \ref{PropertyOfExplainGaussian}), $\textsf{E}_0$ and $\textsf{E}_1$ are statistical indistinguishability. This completes the proof.

\end{proof}

\section{CONCLUSION AND FUTURE WORKS}

In this paper, we present a new lattice-based ring signature scheme with unclaimable anonymity. Particularly, our work simultaneously improves the security and efficiency of the work \cite{PS19}. We proved that the scheme is unforgeable w.r.t. adversarially-chosen-key attack in the standard model based on standard lattice assumptions. The comparison shows that our work is the first lattice-based ring signature scheme with unclaimable anonymity in the standard model and with competitive efficiency. As for future works, it is interesting to focus on how to improve our work with the signature size that is logarithmic in the number of ring members, while at the same time relying on standard lattice assumptions and in the standard model.

\ack{This research was supported by the National Natural Science Foundation of China (61672339) and the Shanghai Aerospace Science and Technology Innovation Foundation (SAST2019-008).}


%

\end{document}